\documentclass[11pt]{article}
\usepackage[margin=1in]{geometry}
\usepackage{amsmath,amssymb,amsthm,bm,mathtools}
\usepackage{hyperref}
\usepackage[capitalize,nameinlink]{cleveref}
\usepackage{algorithm}
\usepackage{algpseudocode}
\usepackage{pgfplots}
\pgfplotsset{compat=1.18} 
\usetikzlibrary{calc}

\usepackage{todonotes}

\newtheorem{theorem}{Theorem}[section]
\newtheorem{proposition}[theorem]{Proposition}

\newtheorem{corollary}[theorem]{Corollary}
\theoremstyle{definition}
\newtheorem{definition}[theorem]{Definition}
\newtheorem{remark}[theorem]{Remark}
\newtheorem{assumption}{Assumption}

\newcommand{\R}{\mathbb{R}}
\newcommand{\T}{\mathbb{T}}
\newcommand{\dd}{\,\mathrm{d}}

\newcommand{\cM}{\mathcal{M}}

\title{Additive Noise, Shift Recovery, and Signed Signals in the Cumulative Distribution Transform \thanks{This work is partially supported by the Office of Naval Research (ONR) under Award NO: N00014-24-1-2147, 
the Air Force Office of Scientific Research (AFOSR) under Award NO: FA9550-25-1-0231, and NSF grantDMS-2408877.
}}
\author{Harbir Antil$^{1}$, Ratna Khatri$^2$, and Aryan Saxena$^1$ \\
$^{1}$Center for Mathematics and Artificial Intelligence and \\ Department of Mathematical Sciences, George Mason University, \\ Fairfax, Virginia 22030. \\
$^{2}$U.S. Naval Research Laboratory, Washington D.C. 
}
\date{\today}

\begin{document}
\maketitle

\begin{abstract}
The cumulative distribution transform (CDT) is a quantile-based transport representation that exactly linearizes one-dimensional translations of positive densities. We study how this structure behaves under additive perturbations and how it can be exploited for shift recovery. Under a local nondegeneracy condition, we derive a first-order expansion showing that additive noise in physical space induces a nonlocal perturbation in CDT space through the primitive of the noise, weighted by the reciprocal density. This yields an explicit description of transform-domain sensitivity and shows, in particular, that perturbations are amplified in low-density regions. When the physical-space perturbation is modeled as a centered Gaussian random field, the induced first-order CDT perturbation is again Gaussian, with an explicit covariance kernel.

We then use this structure to study recovery in CDT coordinates. In the known-template setting, the transport shift is obtained by projection onto the constant mode, giving an explicit estimator together with exactness in the noiseless case and a stability bound under perturbations. In the unknown-template setting, multiple observations permit joint recovery of the shifts and a common template up to the natural constant-mode gauge, leading to a simple de-shift--and--average procedure. We also consider a signed-signal analogue based on the signed cumulative distribution transform (SCDT), where shifts are estimated numerically by feature matching and unknown templates are recovered by alternating alignment and averaging. Numerical experiments validate the perturbation analysis and illustrate effective recovery for both density-valued and signed signals.
\end{abstract}


\section{Introduction}

Transport-based signal representations have become increasingly useful for analyzing variability that is dominated by displacement rather than amplitude. In one spatial dimension, the cumulative distribution transform (CDT) provides a particularly transparent example: it represents a positive normalized density by its quantile map relative to a fixed reference density, so that translations become additive constants in transform space \cite{ParkKolouriKunduRohde2018}. In this sense, the CDT may be viewed as a one-dimensional quantile-based transport representation \cite{AShapiro_DDentcheva_ARuszczynski_2014a,JORoyset_2025a} closely connected to optimal transport ideas \cite{Villani2009,PeyreCuturi2019}. This exact linearization property has made the CDT useful in transform-domain signal analysis, classification, and related transport-based modeling tasks \cite{ParkKolouriKunduRohde2018,KolouriParkRohde2016}. Related CDT-based estimation ideas have also appeared in \cite{RubaiyatHallamNicholsHutchinsonLiRohde2020}, where the transform is used for parametric signal estimation problems including time delay, linear and quadratic dispersion, and more general polynomial time-warps. See also \cite{SThareja_GRohde_RDMartin_IMedri_AAldroubi_2022a} for the signed case.

In many applications, however, transport variability is not observed in isolation. Measurements are corrupted by additive perturbations, templates may be unknown, and in some settings the signals of interest are signed rather than strictly nonnegative densities. These issues raise a natural question: how stable is the CDT under additive noise, and to what extent can its transport geometry still be exploited for recovery and alignment? The aim of this paper is to address that question in a one-dimensional setting.

Our starting point is the exact transport identity for translations in CDT space. We then study additive perturbations of a density \(u\) of the form \(u+\delta\eta\), where \(\eta\) has zero total mass. Under a local nondegeneracy condition at the relevant quantile location, we derive a first-order expansion for the perturbed CDT. This expansion shows that additive noise is transformed through its primitive and weighted by the reciprocal of the density. As a result, the CDT acts nonlocally on additive perturbations and exhibits amplified sensitivity in low-density regions. When the perturbation field is Gaussian, the induced first-order CDT perturbation is again Gaussian, and its covariance can be written explicitly in terms of the covariance kernel of the original noise. This gives a precise probabilistic description of transform-space noise at leading order. While \cite{RubaiyatHallamNicholsHutchinsonLiRohde2020} also contains a statistical discussion of noise in CDT/Wasserstein coordinates, the setting and conclusions are different: that work is tied to a normalized noisy-signal model and an associated correction of the cumulative distribution function, whereas here we derive a local first-order perturbation formula for the CDT itself under explicit regularity assumptions.

The same first-order structure also leads naturally to recovery procedures. In many applications, the dominant variability is not only noisy but also transport-driven, so one would like to estimate and remove shifts before performing averaging or inference. In the known-template setting, the transport contribution appears as a constant mode in CDT space, while the perturbation occupies the remaining directions. This makes it possible to estimate the shift by projection onto the constant mode and to isolate an observable de-shifted residual. In the unknown-template setting, a single observation is insufficient for full recovery, but multiple translated observations sharing a common template permit joint estimation of the shifts and the template up to the natural constant-mode gauge. This yields a simple de-shift--and--average algorithm in CDT space. Thus, while prior CDT-based estimation work has focused on broader parametric deformation models \cite{RubaiyatHallamNicholsHutchinsonLiRohde2020}, our emphasis here is on additive perturbations, shift recovery, and template estimation in the presence of noise.

To extend the recovery viewpoint beyond densities, we also consider the signed cumulative distribution transform (SCDT), which applies transport-based coordinates to the positive and negative parts of a signed signal \cite{AldroubiDiazMartinMedriRohdeThareja2022}. In contrast to the density-valued CDT setting, no closed-form constant-mode shift estimator is available there. We therefore formulate numerical SCDT analogues of the known-template and unknown-template recovery procedures based on shift-grid matching in feature space together with alignment and averaging in signal space. Recent work has also shown the usefulness of SCDT representations in learning and classification settings \cite{RubaiyatLiYinShifatERabbiZhuangRohde2024}.

The paper is organized as follows. Section~\ref{sec:preliminaries} reviews CDFs, quantiles, and the CDT. CDT linearization of translations and geometric properties of the CDT are also stated. Section~\ref{sec:noise} derives the first-order perturbation formula and characterizes the Gaussian structure of linearized CDT noise. Section~\ref{sec:recovery} develops CDT-based shift estimation and joint template recovery. Section~\ref{sec:scdt} introduces the corresponding SCDT recovery procedures for signed signals. Section~\ref{sec:numerics} presents numerical experiments validating the perturbation analysis and illustrating recovery for both CDT and SCDT.

\section{Preliminaries: CDFs, Quantiles, and the CDT}
\label{sec:preliminaries}

\begin{assumption}[Strictly positive admissible densities]\label{ass:density}
Throughout the paper, a density \(w:\R\to\R\) is assumed to satisfy
\[
w\in C(\R),\qquad w(x)>0 \ \text{for all }x\in\R,\qquad \int_{\R} w(x)\,\dd x = 1.
\]
\end{assumption}

\begin{remark}
A probability density need only be nonnegative, integrable, and have unit mass. The stronger assumption in \cref{ass:density} is imposed here to ensure that the associated CDF is strictly increasing, so that the quantile map is well defined as an ordinary inverse, and to exclude degeneracies in the perturbation arguments below.
\end{remark}

\begin{definition}[CDF and quantile]\label{def:cdf_quantile}
Given \(w\) satisfying \cref{ass:density}, define its cumulative distribution function by
\[
W(x):=\int_{-\infty}^x w(t)\,\dd t,\qquad x\in\R.
\]
Then \(W:\R\to(0,1)\) is continuous and strictly increasing. Its quantile function is defined by
\[
Q_w(p):=W^{-1}(p),\qquad p\in(0,1).
\]
\end{definition}

\begin{assumption}[Reference density and weighted space]\label{ass:ref}
Fix a reference density \(r\) satisfying \cref{ass:density}, with CDF
\[
R(\alpha):=\int_{-\infty}^{\alpha} r(s)\,\dd s,\qquad \alpha\in\R.
\]
We also introduce the weighted Hilbert space
\[
L^2_r
:=
\left\{
f:\R\to\R:
\|f\|_{L^2_r}^2
:=
\int_{\R}|f(\alpha)|^2\,r(\alpha)\,\dd\alpha<\infty
\right\}.
\]
For \(f\in L^2_r\), define its \(r\)-weighted mean by
\[
\bar f:=\int_{\R} f(\alpha)\,r(\alpha)\,\dd\alpha.
\]
\end{assumption}

\begin{definition}[CDT / transport map representation]\label{def:cdt}
Let \(w\) satisfy \cref{ass:density}, with CDF \(W\), and let \(R\) be the reference CDF from \cref{ass:ref}. The cumulative distribution transform (CDT) of \(w\) relative to \(R\) is the function
\[
\widehat w(\alpha):=Q_w(R(\alpha))=W^{-1}(R(\alpha)),\qquad \alpha\in\R.
\]
Equivalently, \(\widehat w(\alpha)\) is the unique point satisfying
\[
W(\widehat w(\alpha))=R(\alpha),\qquad \alpha\in\R.
\]
\end{definition}

\subsection{CDT Linearizes Translations Exactly}

The following result is well-known but we recall it for completeness.
\begin{theorem}[Translation linearization]\label{thm:linearize}
Let \(w\) satisfy \cref{ass:density}. For any shift \(s\in\R\), define
\[
w_s(x):=w(x-s).
\]
Then the CDT satisfies
\[
\widehat{w_s}(\alpha)=\widehat w(\alpha)+s
\qquad
\text{for all }\alpha\in\R.
\]
\end{theorem}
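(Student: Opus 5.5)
The plan is to work directly from the characterization in \cref{def:cdt}: \(\widehat w(\alpha)\) is the unique solution \(y\) of \(W(y)=R(\alpha)\). So it suffices to show that the candidate point \(\widehat w(\alpha)+s\) solves the defining equation for \(w_s\). Uniqueness of that solution then forces \(\widehat{w_s}(\alpha)=\widehat w(\alpha)+s\).

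The first step is to check that \(w_s\) satisfies \cref{ass:density}, so that its CDT is defined. Continuity and strict positivity are inherited from \(w\) under the translation \(x\mapsto x-s\). Unit mass follows from the change of variables \(t=x-s\), since Lebesgue measure is translation invariant.

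The second step is to compute the CDF \(W_s\) of \(w_s\). Substituting \(u=t-s\) gives
\[
W_s(x)=\int_{-\infty}^{x} w(t-s)\dd t=\int_{-\infty}^{x-s} w(u)\dd u=W(x-s),\qquad x\in\R .
\]
Evaluating at \(x=\widehat w(\alpha)+s\) then yields
\[
W_s\bigl(\widehat w(\alpha)+s\bigr)=W\bigl(\widehat w(\alpha)\bigr)=R(\alpha).
\]

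The final step is to invoke uniqueness. \(W_s\) is strictly increasing, because \(w_s>0\), and \cref{def:cdf_quantile} applies to \(w_s\). Hence the equation \(W_s(y)=R(\alpha)\) has exactly one solution, namely \(\widehat{w_s}(\alpha)\). Since \(\widehat w(\alpha)+s\) also solves it, the two coincide. Equivalently, \(Q_{w_s}(p)=Q_w(p)+s\) for all \(p\in(0,1)\), and composing with \(R\) gives the claim.

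No step is a genuine obstacle. The only point requiring care is verifying that \(w_s\) is admissible, so that \(\widehat{w_s}\) is well defined and its inverse CDF is single-valued.
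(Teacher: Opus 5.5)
Your proof is correct and takes essentially the same route as the paper: both compute \(W_s(x)=W(x-s)\) and conclude by uniqueness of the quantile. The only differences are cosmetic. The paper shows that \(Q_{w_s}(p)-s\) solves \(W(y)=p\) and uses uniqueness for \(W\), whereas you show that \(\widehat w(\alpha)+s\) solves \(W_s(y)=R(\alpha)\) and use uniqueness for \(W_s\); your explicit check that \(w_s\) is admissible is a step the paper leaves implicit.
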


\begin{proof}
Let \(W\) be the CDF of \(w\). Then the CDF of \(w_s\) is
\[
W_s(x)=\int_{-\infty}^x w(t-s)\,\dd t = W(x-s).
\]
Fix \(\alpha\in\R\) and set \(p:=R(\alpha)\in(0,1)\). By definition of the quantile function, \(Q_{w_s}(p)\) is the unique point satisfying
\[
W_s(Q_{w_s}(p))=p.
\]
Using \(W_s(x)=W(x-s)\), this becomes
\[
W(Q_{w_s}(p)-s)=p.
\]
Since \(Q_w(p)\) is the unique point such that \(W(Q_w(p))=p\), it follows that
\[
Q_{w_s}(p)=Q_w(p)+s.
\]
Evaluating at \(p=R(\alpha)\) gives
\[
\widehat{w_s}(\alpha)
=
Q_{w_s}(R(\alpha))
=
Q_w(R(\alpha))+s
=
\widehat w(\alpha)+s,
\]
which proves the claim.
\end{proof}

\subsection{Geometric Structure and ROM Implications}
\label{sec:geometry}

Let \(g\) satisfy \cref{ass:density} and consider the translation family
\[
\cM_g:=\{\,g(\cdot-s):\ s\in[0,S]\,\}\subset L^2(\R)
\qquad
(\text{or }L^2(\T)\text{ in the periodic case}).
\]
By \cref{thm:linearize},
\[
\widehat{\cM_g}
:=
\{\widehat{g(\cdot-s)}:\ s\in[0,S]\}
=
\{\widehat g+s:\ s\in[0,S]\}
\subset \widehat g+\mathrm{span}\{1\}.
\]
Thus the translated family becomes an affine segment in CDT space, with direction given by the constant mode \(1\).

\begin{remark}[Geometric implication]
In physical space, translated families are typically poorly approximated by low-dimensional linear spaces, which leads to slow singular value decay and limits the effectiveness of standard linear reduced-order models. In CDT space, the same family becomes affine and is therefore governed by a very low-dimensional geometric structure. This simple observation is the basic reason CDT coordinates are well suited to transport-dominated variability, and it underlies both the alignment procedures and the numerical behavior observed later in the paper.
\end{remark}

\section{First-Order Noise Effects in CDT Space}
\label{sec:noise}

\subsection{First-Order CDT Perturbation}

Fix \(\alpha\in\R\) and consider a perturbation of the form
\[
u_\delta = u + \delta\eta,
\qquad
\int_{\R}\eta(x)\,\dd x = 0,
\qquad
E(x):=\int_{-\infty}^x \eta(s)\,\dd s.
\]
We assume that for all sufficiently small \(\delta\), the function \(u_\delta\) remains an admissible density.

\begin{theorem}[First-order CDT perturbation]\label{thm:cdt_noise}
Let \(u\) satisfy \cref{ass:density}, and fix \(\alpha\in\R\). Assume that there exists \(c_\alpha>0\) such that
\[
u(\widehat u(\alpha))\ge c_\alpha.
\]
Then, as \(\delta\to 0\),
\begin{equation}\label{eq:cdt_noise_expansion}
\widehat{u+\delta\eta}(\alpha)
=
\widehat u(\alpha)
-\delta\,\frac{E(\widehat u(\alpha))}{u(\widehat u(\alpha))}
+o(\delta).
\end{equation}
\end{theorem}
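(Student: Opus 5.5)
We plan an implicit-function style argument, carried out by hand to keep the regularity requirements minimal. Write \(U(x)=\int_{-\infty}^x u\), so the CDF of \(u_\delta=u+\delta\eta\) is \(U_\delta=U+\delta E\). Set \(x_0:=\widehat u(\alpha)\), \(p:=R(\alpha)\), and \(x_\delta:=\widehat{u_\delta}(\alpha)\). By \cref{def:cdt}, \(x_\delta\) is the unique point with \(U_\delta(x_\delta)=p\), and \(U(x_0)=p\). Subtracting these two identities gives the exact relation
\begin{equation*}
U(x_\delta)-U(x_0) = -\delta\,E(x_\delta).
\end{equation*}
The whole proof consists of extracting the first-order behavior from this identity. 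Throughout we assume \(\eta\) is integrable, so that \(E\) is bounded (\(|E|\le\|\eta\|_{L^1}\)) and continuous. This is implicit in the setup, since \(E\) must be defined and vanish at \(\pm\infty\).

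\textbf{Step 1: continuity, \(x_\delta\to x_0\).} From the identity, \(|U(x_\delta)-p|\le|\delta|\,\|\eta\|_{L^1}\to0\). Since \(U\) is a continuous strictly increasing bijection onto \((0,1)\), its inverse \(Q_u\) is continuous at \(p\in(0,1)\). Hence \(x_\delta=Q_u(U(x_\delta))\to Q_u(p)=x_0\).

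\textbf{Step 2: linearize.} By the mean value theorem (\(U\in C^1\) because \(u\) is continuous), there is \(\xi_\delta\) between \(x_0\) and \(x_\delta\) with \(U(x_\delta)-U(x_0)=u(\xi_\delta)(x_\delta-x_0)\). Since \(\xi_\delta\to x_0\) and \(u\) is continuous, \(u(\xi_\delta)\to u(x_0)\ge c_\alpha>0\). In particular \(u(\xi_\delta)\ge c_\alpha/2\) for small \(\delta\), and
\begin{equation*}
x_\delta-x_0=-\delta\,\frac{E(x_\delta)}{u(\xi_\delta)}.
\end{equation*}
This already gives \(x_\delta-x_0=O(\delta)\).

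\textbf{Step 3: identify the coefficient.} We need
\[
\frac{E(x_\delta)}{u(\xi_\delta)}\to\frac{E(x_0)}{u(x_0)},
\]
which follows from continuity of \(E\) and \(u\) together with Steps 1–2. Then
\[
x_\delta-x_0+\delta E(x_0)/u(x_0)=\delta\cdot o(1)=o(\delta),
\]
which is \eqref{eq:cdt_noise_expansion}.

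The main obstacle is Step 1: establishing convergence of \(x_\delta\) without assuming a uniform lower bound on \(u\). This is handled by continuity of the quantile map rather than by any derivative bound, so the nondegeneracy hypothesis \(u(x_0)\ge c_\alpha\) is only used in Step 2. If one wanted a quantitative remainder, for instance \(O(\delta^2)\) under \(u,\eta\in C^1\), we would instead Taylor-expand \(U\) and \(E\) to second order around \(x_0\), using the \(O(\delta)\) bound from Step 2.
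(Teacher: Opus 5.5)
Your proof is correct and follows essentially the same route as the paper: subtract the two quantile identities to get \(U(x_\delta)-U(x_0)=-\delta E(x_\delta)\), linearize \(U\) at \(x_0\), use continuity of \(E\), and divide by \(u\) using the lower bound. Your version is slightly more complete. The paper's Taylor/\(o(\cdot)\) argument silently assumes \(x_\delta\to x_0\), which your Step 1 actually proves (via boundedness of \(E\) and continuity of the quantile map), and your mean-value form gives the \(O(\delta)\) bound directly.
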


\begin{proof}
Let \(U\) be the CDF of \(u\) and let \(U_\delta\) be the CDF of \(u_\delta=u+\delta\eta\). Since
$
\int_{\R}\eta(x)\,\dd x = 0,
$
we have
\[
U_\delta(x)=U(x)+\delta E(x).
\]
Fix \(\alpha\in\R\) and set
$
p:=R(\alpha)\in(0,1).
$
By definition, \(\widehat u(\alpha)\) is the unique solution of
\[
U(\widehat u(\alpha))=p,
\]
and, for sufficiently small \(|\delta|\), \(\widehat{u+\delta\eta}(\alpha)\) is the unique solution of
\[
U_\delta(\widehat{u+\delta\eta}(\alpha))=p.
\]
Let
\[
x_\delta:=\widehat{u+\delta\eta}(\alpha),
\qquad
x_0:=\widehat u(\alpha).
\]
Then
\[
0
=
p-U_\delta(x_\delta)
=
p-U(x_\delta)-\delta E(x_\delta).
\]
Subtracting the identity
\[
0=p-U(x_0)
\]
gives
\[
U(x_\delta)-U(x_0)+\delta E(x_\delta)=0.
\]
Since \(U'(x)=u(x)\) and \(u(x_0)\ge c_\alpha>0\), a first-order Taylor expansion yields
\[
U(x_\delta)-U(x_0)
=
u(x_0)(x_\delta-x_0)+o(|x_\delta-x_0|).
\]
Also, since \(E\) is continuous,
\[
E(x_\delta)=E(x_0)+o(1)
\qquad
\text{as }x_\delta\to x_0.
\]
Therefore
\[
u(x_0)(x_\delta-x_0)+\delta E(x_0)+o(|x_\delta-x_0|)+\delta o(1)=0.
\]
Writing \(h_\delta:=x_\delta-x_0\), we obtain
\[
u(x_0)h_\delta
=
-\delta E(x_0)-o(|h_\delta|)-\delta o(1).
\]
Since \(u(x_0)\ge c_\alpha>0\), it follows that \(h_\delta=O(\delta)\). Hence \(o(|h_\delta|)=o(\delta)\), and so
\[
h_\delta
=
-\delta \frac{E(x_0)}{u(x_0)}+o(\delta),
\]
which is exactly \eqref{eq:cdt_noise_expansion}.
\end{proof}

\begin{remark}[Meaning of the nondegeneracy assumption]
The condition
\[
u(\widehat u(\alpha))\ge c_\alpha>0
\]
is a local nondegeneracy condition at the quantile location \(\widehat u(\alpha)\). It ensures that the CDF is not too flat there, so that small perturbations of cumulative mass produce controlled perturbations of the quantile. For densities such as Gaussians, this condition holds automatically for every fixed \(\alpha\), although the lower bound need not be uniform in \(\alpha\); in particular, it may become small in the far tails.
\end{remark}

\begin{remark}[Transport and perturbation are separated in CDT space]
If \(u=u_0(\cdot-s)\) is a translated template, then \cref{thm:linearize,thm:cdt_noise} yield the additive first-order decomposition
\[
\widehat u(\alpha)
=
\widehat u_0(\alpha)+s
-\delta\,\frac{E(\widehat u_0(\alpha))}{u_0(\widehat u_0(\alpha))}
+o(\delta).
\]
Thus the transport contribution appears as the scalar shift \(s\), while the perturbation enters through the primitive \(E\), rescaled by the reciprocal density. In particular, low-density regions amplify perturbations.
\end{remark}

\subsection{Gaussian Structure of Linearized CDT Noise}

We now characterize the distributional structure of the first-order CDT perturbation. We begin by defining the linear operator that governs how additive perturbations in physical space map to displacements in the CDT domain.

\begin{definition}[Linearized CDT operator]
\label{def:cdt_operator}
For a given density $u$, we define the linearized CDT operator $\mathcal{L}_u$ acting on a perturbation $\eta$ as
\[
(\mathcal{L}_u \eta)(\alpha) := -\frac{1}{u(\widehat{u}(\alpha))} \int_{-\infty}^{\widehat{u}(\alpha)} \eta(s) \, \dd s.
\]
\end{definition}

By \cref{thm:cdt_noise}, the CDT of the noisy signal satisfies
\[
\widehat{u+\delta\eta}(\alpha) = \widehat{u}(\alpha) + \delta (\mathcal{L}_u \eta)(\alpha) + o(\delta).
\]
This operator reveals that the CDT does not act locally: it first integrates the noise and then rescales it by the reciprocal of the template density.

\subsubsection*{Gaussian Perturbations}

\begin{assumption}[Centered Gaussian noise]
\label{ass:gaussian_noise}
The perturbation $\eta$ is a centered Gaussian random field with covariance kernel $C_\eta(s,t) := \operatorname{Cov}(\eta(s), \eta(t))$. We assume $\eta \in L^1(\mathbb{R})$ and $\int_{\mathbb{R}} \eta = 0$ almost surely, with $C_\eta$ sufficiently integrable over the domain of interest.
\end{assumption}

\begin{theorem}[Induced CDT noise]
\label{thm:cdt_gaussian}
Under \cref{ass:gaussian_noise}, the linearized CDT noise $\xi := \mathcal{L}_u \eta$ is a centered Gaussian random field. Its covariance kernel is given by
\[
\operatorname{Cov}(\xi(\alpha), \xi(\beta)) = \frac{1}{u(\widehat{u}(\alpha))u(\widehat{u}(\beta))} \int_{-\infty}^{\widehat{u}(\alpha)} \int_{-\infty}^{\widehat{u}(\beta)} C_\eta(s, t) \, \dd s \, \dd t.
\]
\end{theorem}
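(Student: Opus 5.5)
The plan is to view $\xi=\mathcal{L}_u\eta$ as the image of the Gaussian field $\eta$ under a family of deterministic bounded linear functionals, one for each $\alpha$. Gaussianity is then inherited from the general fact that continuous linear images of centered Gaussian fields are centered Gaussian. The covariance follows by exchanging expectation with the two integrals (Fubini). Throughout, the template $u$ and hence $\widehat u$ are deterministic. For each fixed $\alpha$, $x_\alpha:=\widehat u(\alpha)$ is a fixed point, and $w_\alpha:=u(x_\alpha)>0$ by \cref{ass:density}. So
\[
\xi(\alpha)=-\frac{1}{w_\alpha}\,E(x_\alpha),\qquad E(x_\alpha)=\int_{-\infty}^{x_\alpha}\eta(s)\,\dd s=\langle \eta,\mathbf 1_{(-\infty,x_\alpha]}\rangle .
\]
This reduces everything to the primitive $E$ evaluated at deterministic points.

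First I will show that $E$ is a centered Gaussian field. I take any finite collection $\alpha_1,\dots,\alpha_n$ and scalars $c_1,\dots,c_n$, and consider $Z=\sum_k c_k\xi(\alpha_k)=\int_{\R} \phi(s)\eta(s)\,\dd s$ with the deterministic test function $\phi=-\sum_k (c_k/w_{\alpha_k})\mathbf 1_{(-\infty,x_{\alpha_k}]}$. It suffices to show $Z$ is a centered (possibly degenerate) Gaussian scalar. The integral exists pathwise because $\eta\in L^1$ a.s. To see Gaussianity, I approximate it by Riemann-type sums $Z_N=\sum_j \phi(s_j)\eta(s_j)\Delta s_j$ over truncations $[-N,x]$. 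Each $Z_N$ is a finite linear combination of values of a centered Gaussian field, hence centered Gaussian. I will argue $Z_N\to Z$ in $L^2(\Omega)$, using the integrability of $C_\eta$ to control $\mathbb E|Z-Z_N|^2$ via the double integral of $C_\eta$ over the discarded region and the discretization error. Since $L^2$ limits of centered Gaussians are centered Gaussian, $Z$ is centered Gaussian. Because every finite linear combination of the $\xi(\alpha_k)$ is Gaussian, the finite-dimensional distributions are jointly Gaussian, so $\xi$ is a centered Gaussian random field. Centering can also be seen directly: $\mathbb E\,\xi(\alpha)=-w_\alpha^{-1}\int_{-\infty}^{x_\alpha}\mathbb E\,\eta(s)\,\dd s=0$ by Fubini.

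For the covariance, I will write
\[
\mathbb E[\xi(\alpha)\xi(\beta)]=\frac{1}{w_\alpha w_\beta}\,\mathbb E\!\left[\int_{-\infty}^{x_\alpha}\!\int_{-\infty}^{x_\beta}\eta(s)\eta(t)\,\dd t\,\dd s\right]
\]
and exchange expectation and integration to obtain $\int\!\int C_\eta(s,t)\,\dd t\,\dd s$. Substituting $w_\alpha=u(\widehat u(\alpha))$ then gives the stated kernel; the order of the $s,t$ integrals is immaterial by Fubini and the symmetry of $C_\eta$. The exchange is justified by Tonelli applied to $\mathbb E|\eta(s)\eta(t)|\le \sqrt{C_\eta(s,s)C_\eta(t,t)}$. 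This requires $\int_{-\infty}^{x}\sqrt{C_\eta(s,s)}\,\dd s<\infty$, which I take as the content of ``$C_\eta$ sufficiently integrable'' in \cref{ass:gaussian_noise}.

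The main obstacle is this measure-theoretic step: making the integrability hypothesis precise enough both to apply Fubini and to pass to the $L^2(\Omega)$ limit that transfers Gaussianity to the integral functional. Joint measurability of $(s,\omega)\mapsto\eta(s,\omega)$ is also needed here. I will simply state the concrete sufficient condition $\int_{-\infty}^{x}\sqrt{C_\eta(s,s)}\,\dd s<\infty$ for the relevant $x$, together with measurability, rather than aim for sharp conditions. The remaining algebra is routine.
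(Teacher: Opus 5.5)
Your proposal follows the paper's proof: $\xi$ is a deterministic linear image of $\eta$, hence centered Gaussian, and the covariance comes from expanding $E(\widehat u(\alpha))E(\widehat u(\beta))$ and exchanging expectation with the double integral. The paper asserts these steps without justification. Your Tonelli bound $\mathbb{E}|\eta(s)\eta(t)|\le\sqrt{C_\eta(s,s)C_\eta(t,t)}$, under $\int_{-\infty}^{x}\sqrt{C_\eta(s,s)}\,\dd s<\infty$ and joint measurability, does justify the mean and covariance computations. The symmetry of $C_\eta$ also correctly handles the order of the upper limits $\widehat u(\alpha),\widehat u(\beta)$ in the stated formula.

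One piece of your added rigor fails as written. Integrability of $C_\eta$ does not control the discretization error of the Riemann sums $Z_N$, because that error depends on point values $C_\eta(s_j,s_l)$. For example, take $\eta(s)=g\,\mathbf{1}_{\mathbb{Q}}(s)$ with $g\sim\mathcal{N}(0,1)$. It meets all your hypotheses and has pathwise integral $Z=0$. Yet its Riemann sums with rational nodes equal $g$ times Riemann sums of $\phi$, so they do not tend to $Z$.

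There are two ways to repair this. You can add continuity of $C_\eta$ (mean-square continuity of $\eta$), under which your $L^2(\Omega)$ limit argument goes through. Or you can avoid discretization entirely. Your integrability condition already gives $Z\in L^2(\Omega)$ by Minkowski's integral inequality. Let $H\subset L^2(\Omega)$ be the closed linear span of $\{\eta(s)\}_{s}$. For every $Y$ orthogonal to $H$, Fubini gives $\mathbb{E}[ZY]=\int\phi(s)\,\mathbb{E}[\eta(s)Y]\,\dd s=0$. Hence $Z\in H$, and elements of $H$ are centered and jointly Gaussian, which is exactly what you need.
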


\begin{proof}
Since $\mathcal{L}_u$ is a linear operator, the field $\xi$ is a linear transformation of the Gaussian field $\eta$. Its finite-dimensional marginals are therefore multivariate Gaussian. The mean is zero by the linearity of expectation: $\mathbb{E}[\mathcal{L}_u \eta] = \mathcal{L}_u \mathbb{E}[\eta] = 0$. The covariance follows by substituting the definition of $\xi$ and applying the linearity of the covariance operator to the integral:
\[
\operatorname{Cov}(\xi(\alpha), \xi(\beta)) = \mathbb{E} \left[ \frac{E(\widehat{u}(\alpha))}{u(\widehat{u}(\alpha))} \frac{E(\widehat{u}(\beta))}{u(\widehat{u}(\beta))} \right],
\]
where $E(x) = \int_{-\infty}^x \eta(s) \, \dd s$. Expanding the product of integrals yields the result.
\end{proof}

\begin{corollary}[First-order CDT covariance]
\label{cor:cdt_stats}
Under the first-order approximation $\widehat{u+\delta\eta} = \widehat{u} + \delta\xi + o(\delta)$, the covariance of the observed CDT satisfies
\[
\operatorname{Cov}\bigl(\widehat{u+\delta\eta}(\alpha),\widehat{u+\delta\eta}(\beta)\bigr)
=
\delta^2 \operatorname{Cov}(\xi(\alpha),\xi(\beta)) + o(\delta^2).
\]
In particular, the variance of the CDT at coordinate $\alpha$ scales with the square of the perturbation magnitude $\delta$ and the inverse square of the local density.
\end{corollary}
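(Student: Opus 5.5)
We will prove the corollary by treating the first-order expansion as an exact decomposition plus a remainder, and then using bilinearity of the covariance. Fix \(\alpha,\beta\) and write
\[
\widehat{u+\delta\eta}(\alpha)=\widehat u(\alpha)+\delta\xi(\alpha)+\rho_\delta(\alpha),
\qquad
\rho_\delta(\alpha)=o(\delta),
\]
where \(\widehat u\) is deterministic and \(\xi=\mathcal{L}_u\eta\) is the field from \cref{thm:cdt_gaussian}. Since adding a deterministic function does not change covariances, bilinearity gives
\[
\operatorname{Cov}\bigl(\widehat{u+\delta\eta}(\alpha),\widehat{u+\delta\eta}(\beta)\bigr)
=\delta^2\operatorname{Cov}(\xi(\alpha),\xi(\beta))
+\delta\operatorname{Cov}(\xi(\alpha),\rho_\delta(\beta))
+\delta\operatorname{Cov}(\rho_\delta(\alpha),\xi(\beta))
+\operatorname{Cov}(\rho_\delta(\alpha),\rho_\delta(\beta)).
\]
The leading term is exactly the claimed one. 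Its explicit form, with the prefactor \(1/(u(\widehat u(\alpha))u(\widehat u(\beta)))\), follows directly from \cref{thm:cdt_gaussian}. Setting \(\alpha=\beta\) then yields the variance statement: the variance is \(\delta^2\) times the variance of \(E(\widehat u(\alpha))\), divided by \(u(\widehat u(\alpha))^2\).

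The main obstacle is controlling the three remainder terms. We need \(\rho_\delta\) to be \(o(\delta)\) in \(L^2(\Omega)\), not merely pathwise as \cref{thm:cdt_noise} provides. Granting this, Cauchy--Schwarz finishes the argument:
\[
|\operatorname{Cov}(\xi(\alpha),\rho_\delta(\beta))|\le \|\xi(\alpha)\|_{L^2(\Omega)}\,\|\rho_\delta(\beta)\|_{L^2(\Omega)}=o(\delta),
\]
so the cross terms contribute \(\delta\cdot o(\delta)=o(\delta^2)\). The last term is bounded by \(\|\rho_\delta(\alpha)\|\,\|\rho_\delta(\beta)\|=o(\delta^2)\). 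Finiteness of \(\|\xi(\alpha)\|_{L^2(\Omega)}\) comes from the integrability of \(C_\eta\) in \cref{ass:gaussian_noise}.

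To obtain the \(L^2\) control we first revisit the proof of \cref{thm:cdt_noise}. With \(x_\delta\) lying between \(x_0\) and the perturbed quantile, the mean value theorem gives
\[
h_\delta=-\delta\frac{E(x_\delta)}{u(\tilde x_\delta)},
\]
for some intermediate point \(\tilde x_\delta\). Hence
\[
\rho_\delta=-\delta\Bigl(\frac{E(x_\delta)}{u(\tilde x_\delta)}-\frac{E(x_0)}{u(x_0)}\Bigr).
\]
By continuity of \(u\) and \(E\) and the nondegeneracy \(u(x_0)\ge c_\alpha\), the bracket tends to zero almost surely. It therefore suffices to dominate \(|\rho_\delta|/\delta\) uniformly in small \(\delta\) by a square-integrable variable and invoke dominated convergence. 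One candidate is \(2\sup|E|/c\), with \(c=c_\alpha/2\) chosen so that \(u\ge c\) on a neighborhood of \(x_0\). We have \(\sup|E|\le\|\eta\|_{L^1}\), which has Gaussian-type moments.

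The delicate point is that \(x_\delta\) must remain in that neighborhood, uniformly in the randomness. We would secure this by restricting to the event \(\{\delta\|\eta\|_{L^1}\text{ small}\}\) and showing that its complement has probability decaying faster than any power of \(\delta\), again by Gaussian tails. If this uniformity cannot be established, the fallback is to interpret the corollary as a statement about the covariance of the linearized model \(\widehat u+\delta\xi\). In that case the identity is immediate from bilinearity, and the \(o(\delta^2)\) term is understood in the same formal sense as the approximation hypothesis.
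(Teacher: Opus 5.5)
\textbf{Comparison with the paper.} The paper gives no separate proof of this corollary. It is stated right after \cref{thm:cdt_gaussian} and is meant at the formal level of the linearized model $\widehat u+\delta\xi$: the $o(\delta)$ remainder is carried through bilinearity of the covariance without further justification. Your fallback paragraph is exactly that argument. Your derivation of the variance statement from \cref{thm:cdt_gaussian} with $\alpha=\beta$ also matches the paper.

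Your attempt to go further is well motivated. You are right that the pathwise $o(\delta)$ in \cref{thm:cdt_noise} does not by itself give an $o(\delta^2)$ covariance remainder. Your domination step also works on the good event $A_\delta=\{\delta\|\eta\|_{L^1}\le\varepsilon_0\}$. There $|U(x_\delta)-R(\alpha)|\le\delta\sup|E|$ traps $x_\delta$ in the neighborhood where $u\ge c$, so $|\rho_\delta|/\delta\le 2\sup|E|/c$, a variable with all moments by Fernique's theorem.

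\textbf{The gap: the complement $A_\delta^c$.} Showing that $\mathbb{P}(A_\delta^c)$ decays faster than any power of $\delta$ controls nothing by itself. What you need is $\mathbb{E}\bigl[\rho_\delta(\alpha)^2\mathbf{1}_{A_\delta^c}\bigr]=o(\delta^2)$. For instance, by H\"older, $\mathbb{E}\bigl[\rho_\delta(\alpha)^2\mathbf{1}_{A_\delta^c}\bigr]\le\|\rho_\delta(\alpha)\|_{L^4(\Omega)}^2\,\mathbb{P}(A_\delta^c)^{1/2}$. This requires a moment bound on $\widehat{u+\delta\eta}(\alpha)$ growing at most polynomially in $1/\delta$, which you never supply.

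There is a worse problem. For Gaussian $\eta$ and any fixed $\delta>0$, $u+\delta\eta$ is negative somewhere with positive probability. The paper's admissibility hypothesis can only hold pathwise, for a path-dependent range of $\delta$. So on part of $\Omega$ the observed CDT, and with it the left-hand covariance, is not even defined.

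\textbf{How to close it.} You have two options.
\begin{enumerate}
\item Adopt a convention valid on all of $\Omega$, such as the generalized inverse $\inf\{x: U(x)+\delta E(x)\ge R(\alpha)\}$, and prove the missing moment bound for it. This needs assumptions on the tails of $u$ and of $\eta$ beyond \cref{ass:gaussian_noise}.
\item Restate the corollary as a covariance conditional on $A_\delta$ intersected with the admissibility event. Here your dominated-convergence argument goes through as written. The conditioning event has probability tending to one, which is what the pathwise admissibility hypothesis gives. Hence conditioning changes $\operatorname{Cov}(\xi(\alpha),\xi(\beta))$ only by $o(1)$.
\end{enumerate}

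A small slip: in your mean-value step, $x_\delta$ is the perturbed quantile itself; it is $\tilde x_\delta$ that lies between $x_0$ and $x_\delta$.
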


The structure of $\mathcal{L}_u$ highlights how physical noise is ``recolored" in the CDT domain:
\begin{itemize}
    \item \textbf{Spatial Correlation:} Because of the integration step, even spatially uncorrelated noise (e.g., white noise) becomes correlated in CDT space.
    \item \textbf{Heteroscedasticity:} The $1/u(\widehat{u}(\alpha))$ term acts as a signal-dependent gain, amplifying noise in regions where the template density is low.
    \item \textbf{Nonlocality:} The covariance at $(\alpha, \beta)$ depends on the total mass of the noise covariance kernel up to the corresponding points in physical space.
\end{itemize}

\subsubsection*{Connection to the additive CDT model}

If \(u=u_0(\cdot-s)\) is a translated template, then \cref{thm:linearize,thm:cdt_noise} yield the first-order decomposition
\[
\widehat u_{\mathrm{obs}}(t,\alpha)
=
\widehat u_0(\alpha)+s(t)+\delta\xi(t,\alpha),
\]
with first-order perturbation field
\[
\xi(t,\alpha)
=
-\frac{1}{u_0(\widehat u_0(\alpha))}
\int_{-\infty}^{\widehat u_0(\alpha)} \eta(t,s)\,\dd s.
\]
Thus, in CDT space, the leading transport contribution appears as a scalar shift, while the noise enters through a structured linear integral operator applied to the physical-space perturbation.

\section{Shift Estimation and Residual Separation in CDT Space}
\label{sec:recovery}

\subsection{Known-Template Recovery}

We first consider the translated-template setting in which the observed CDT signal satisfies
\begin{equation}\label{eq:obs_model_cdt}
\widehat u_{\mathrm{obs}}(t,\alpha)
=
\widehat u_0(\alpha)+s(t)+\delta\xi(t,\alpha),
\end{equation}
where:
\begin{itemize}
    \item \(\widehat u_{\mathrm{obs}}(t,\cdot)\) is observed,
    \item the template \(\widehat u_0\) is known,
    \item the shift \(s(t)\) is unknown,
    \item the perturbation field \(\xi(t,\alpha)\) is unknown,
    \item and the noise amplitude \(\delta\) is unknown.
\end{itemize}
Thus the goal is not to identify \(\delta\) and \(\xi\) separately from a single observation, but rather to recover the transport contribution and the observable de-shifted residual. The geometric structure of this decomposition is illustrated in \cref{fig:projection_geom}.

\begin{figure}[ht]
\centering
\begin{tikzpicture}[scale=1.1]
    \draw[->, thick] (-0.5,0) -- (5.5,0) node[right, font=\small] {Constant Shifts ($\text{span}\{1\}$)};
    \draw[->, thick] (0,-0.5) -- (0,4.0) node[above, font=\small] {Zero-Mean Residuals ($\{f: \overline{f}=0\}$)};
    
    \coordinate (H) at (1,0.5);    
    \coordinate (Y) at (4.5,3.0);  
    \coordinate (Proj) at (4.5,0.5); 
    
    \filldraw[black] (H) circle (2.5pt);
    \node[anchor=east, xshift=-4pt] at (H) {\(\widehat{u}_0\)};
    
    \filldraw[blue] (Y) circle (2.5pt);
    \node[anchor=south west] at (Y) {\(\widehat{u}_{\mathrm{obs}}\)};
    
    \draw[->, >=stealth, dashed, gray, thin] (H) -- (Y) node[midway, sloped, above, black, font=\footnotesize] {Total Perturbation};
    
    \draw (4.25,0.5) -- (4.25,0.75) -- (4.5,0.75);
    
    \draw[line width=1.5pt, blue, ->] (1,0.5) -- (Proj);
    \node[anchor=south, blue, yshift=2pt] at (2.75, 0.5) {Estimated Shift \(s^*\)};
    
    \draw[line width=1.5pt, red, ->] (Proj) -- (Y);
    \node[anchor=west, red] at (4.5,1.75) {Residual \(\rho = \delta(\xi - \overline{\xi})\)};

\end{tikzpicture}
\caption{Geometric decomposition of the CDT perturbation in \(L^2_r\). The transport component lies in the constant mode \(\mathrm{span}\{1\}\), while the centered residual lies in its orthogonal complement. Projecting the observed change onto the constant mode therefore isolates the shift estimate \(s^\ast\).}
\label{fig:projection_geom}
\end{figure}

\begin{figure}[h!]
\centering
\begin{tikzpicture}
\begin{axis}[
    width=0.9\textwidth, height=7.5cm,
    title={De-shifting with Known Template (\cref{alg:cdt_deshift})},
    xlabel={$\alpha$}, ylabel={$\widehat{u}(\alpha)$},
    ticks=none, domain=-3:3, samples=50,
    ymin=-1.0, ymax=2.8,
    legend style={at={(0.5,-0.2)}, anchor=north, legend columns=-1},
    clip=false
]

    \addplot[black, ultra thick] {1/(1+exp(-x)) + 0.3}; 
    \node at (axis cs: 2.2, 1.2) [black, anchor=west, font=\footnotesize] {Template $\widehat{u}_0$};
    
    \draw[black, dashed, thin] (axis cs: -3, 0.3) -- (axis cs: 3, 0.3);
    \node at (axis cs: 3.0, 0.3) [black, anchor=west, font=\tiny] {$\overline{\widehat{u}_0}$};

    \addplot[blue, opacity=0.5, thick] {1/(1+exp(-x)) + 1.5 + 0.15*sin(deg(6*x))}; 
    \node at (axis cs: -2.8, 2.0) [blue, anchor=west, font=\footnotesize] {Obs $\widehat{u}_{\mathrm{obs}}$};
    
    \draw[blue, dashed, thin] (axis cs: -3, 1.5) -- (axis cs: 3, 1.5);
    \node at (axis cs: 3.0, 1.5) [blue, anchor=west, font=\tiny] {$\overline{\widehat{u}_{\mathrm{obs}}}$};

    \coordinate (MeanBlk) at (axis cs: -2.2, 0.3);
    \coordinate (MeanBlu) at (axis cs: -2.2, 1.5);
    \draw[<->, >=stealth, darkgray, ultra thick] (MeanBlk) -- (MeanBlu);
    \node[darkgray, anchor=east, font=\footnotesize, xshift=-2pt] at (axis cs: -2.2, 0.9) {$s^* = \overline{\widehat{u}_{\mathrm{obs}}} - \overline{\widehat{u}_0}$};

    \addplot[red, thick, dashed] {1/(1+exp(-x)) + 0.3 + 0.15*sin(deg(6*x))};
    \node at (axis cs: 2.8, 0.65) [red, anchor=east, font=\footnotesize] {$\widehat{u}_{\mathrm{obs}}^{\mathrm{aligned}} := \widehat u_{\mathrm{obs}}(t,\alpha)-s^\ast(t)$};

    \draw[->, >=stealth, ultra thick, red] (axis cs: 0, 1.8) -- (axis cs: 0, 0.8) 
        node[midway, right, font=\footnotesize, text width=2cm] {Subtract $s^*$};

    \coordinate (RedPt) at (axis cs: -0.82, 0.8);
    \coordinate (BlkPt) at (axis cs: -0.82, 0.6);
    \draw[<->, >=stealth, red, thick] (RedPt) -- (BlkPt) 
        node[midway, left, font=\tiny, xshift=-1pt] {$\rho$};

\end{axis}
\end{tikzpicture}
\caption{Visualizing the de-shifting process in \cref{alg:cdt_deshift}. The dashed lines indicate the \(r\)-weighted means of the template and the observation. Their difference gives the shift estimate \(s^\ast\). Subtracting \(s^\ast\) from the observation yields the aligned signal, while the remaining shape discrepancy is the residual \(\rho\).}

\label{fig:single_pipeline_full_v2}
\end{figure}

The key point is that in \eqref{eq:obs_model_cdt} the shift \(s(t)\) is constant in \(\alpha\), so it lies in the one-dimensional subspace \(\mathrm{span}\{1\}\subset L^2_r\). As shown in \cref{fig:projection_geom}, the orthogonality of the constant mode and the centered residuals suggests estimating the shift by projection onto the constant mode. The algorithm is visualized in Figure~\ref{fig:single_pipeline_full_v2} and described in detail in \cref{alg:cdt_deshift}. 

For fixed \(t\), define
\[
y(\alpha):=\widehat u_{\mathrm{obs}}(t,\alpha),
\qquad
h(\alpha):=\widehat u_0(\alpha).
\]
Consider the least-squares problem
\begin{equation}\label{eq:shift_ls_single}
\min_{s\in\R}\ \frac12\|y-(h+s)\|_{L^2_r}^2.
\end{equation}
This seeks the best approximation of the observed CDT signal by the affine line \(h+\mathrm{span}\{1\}\subset L^2_r\).

\begin{proposition}[Explicit shift estimator, exactness, and stability]\label{prop:shift_est}
For fixed \(t\), the unique minimizer of \eqref{eq:shift_ls_single} is
\[
s^\ast(t)=\overline{\widehat u_{\mathrm{obs}}(t,\cdot)}-\overline{\widehat u_0}.
\]
Equivalently, \(s^\ast(t)\) is the coefficient obtained by projecting
\[
\widehat u_{\mathrm{obs}}(t,\cdot)-\widehat u_0
\]
onto \(\mathrm{span}\{1\}\subset L^2_r\).
Moreover:
\begin{enumerate}
\item If
\[
\widehat u_{\mathrm{obs}}(t,\alpha)=\widehat u_0(\alpha)+s(t),
\]
then
\[
s^\ast(t)=s(t).
\]
Thus the estimator is exact in the noiseless translated-template case.

\item If \eqref{eq:obs_model_cdt} holds with \(\xi(t,\cdot)\in L^2_r\), then
\[
s^\ast(t)=s(t)+\delta\,\overline{\xi(t,\cdot)},
\]
and therefore
\[
|s^\ast(t)-s(t)|\le |\delta|\,\|\xi(t,\cdot)\|_{L^2_r}.
\]
\end{enumerate}
\end{proposition}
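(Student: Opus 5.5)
The plan is to treat \eqref{eq:shift_ls_single} as a one-dimensional quadratic minimization in \(L^2_r\), using only that \(r\) is a probability density, so \(\|1\|_{L^2_r}^2=\int_\R r=1\). First, a well-posedness point: the objective is finite only if \(y-h\in L^2_r\). I will assume this, as is implicit in the model, since in case 2 we have \(y-h=s(t)+\delta\xi(t,\cdot)\in L^2_r\). Write \(d:=y-h\in L^2_r\). Because \(r\) is a probability density, Cauchy--Schwarz gives \(L^2_r\subset L^1_r\), so \(\bar d\) is well defined. If \(\widehat u_0\) and \(\widehat u_{\mathrm{obs}}(t,\cdot)\) individually lie in \(L^1_r\), then \(\bar d=\overline{y}-\overline{h}\). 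Otherwise I will read the formula for \(s^\ast\) as \(\bar d\).

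Next I will expand the objective:
\[
J(s)=\tfrac12\|d-s\|_{L^2_r}^2=\tfrac12\|d\|_{L^2_r}^2-s\,\bar d+\tfrac12 s^2 .
\]
Here I use \(\langle d,1\rangle_{L^2_r}=\bar d\) and \(\|1\|^2_{L^2_r}=1\). This is a strictly convex quadratic in \(s\), so it has the unique minimizer \(s^\ast=\bar d=\overline{\widehat u_{\mathrm{obs}}(t,\cdot)}-\overline{\widehat u_0}\). Since \(1\) is a unit vector, the orthogonal projection of \(d\) onto \(\mathrm{span}\{1\}\) is \(\langle d,1\rangle 1=\bar d\,1\). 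This gives the stated equivalence, and the normal equation \(\langle d-s^\ast,1\rangle=0\) expresses the orthogonality shown in \cref{fig:projection_geom}.

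For item 1, I substitute \(d\equiv s(t)\). Then \(\bar d=s(t)\int r=s(t)\), so \(s^\ast(t)=s(t)\).

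For item 2, I substitute \(d=s(t)+\delta\xi(t,\cdot)\). Linearity of the \(r\)-weighted mean gives \(s^\ast(t)=s(t)+\delta\,\overline{\xi(t,\cdot)}\). The bound then follows from Cauchy--Schwarz in \(L^2_r\):
\[
|\overline{\xi}|=|\langle \xi,1\rangle_{L^2_r}|\le\|\xi\|_{L^2_r}\|1\|_{L^2_r}=\|\xi\|_{L^2_r}.
\]

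The only real obstacle is the integrability bookkeeping: making sure \(s\mapsto J(s)\) is finite, and that the means are defined. The embedding \(L^2_r\subset L^1_r\) and the normalization \(\int r=1\) handle both. The algebra itself is routine.
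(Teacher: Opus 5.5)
Your proposal is correct and follows essentially the same route as the paper: both treat the objective as a strictly convex quadratic in \(s\) (you expand it, the paper differentiates it) to obtain \(s^\ast=\bar y-\bar h\), then substitute the two observation models and apply Cauchy--Schwarz using \(\int_\R r=1\). Your remarks on \(L^2_r\subset L^1_r\) and the explicit projection identity \(\langle d,1\rangle_{L^2_r}\,1\) are sound refinements that the paper leaves implicit.
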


\begin{proof}
Fix \(t\) and define
\[
\Phi(s):=\frac12\|y-(h+s)\|_{L^2_r}^2.
\]
This is a strictly convex quadratic in \(s\), so the unique minimizer satisfies \(\Phi'(s)=0\). Differentiating gives
\[
\Phi'(s)
=
-\int_{\R}\bigl(y(\alpha)-h(\alpha)-s\bigr)\,r(\alpha)\,\dd\alpha
=-(\bar y-\bar h-s).
\]
Hence
\[
s^\ast=\bar y-\bar h.
\]
If \(y=h+s(t)\), then \(s^\ast=s(t)\).
If instead \(y=h+s(t)+\delta\xi\), then
\[
s^\ast=\bar y-\bar h=s(t)+\delta\,\overline{\xi(t,\cdot)}.
\]
Since \(\int_\R r(\alpha)\,\dd\alpha=1\), Cauchy--Schwarz in \(L^2_r\) yields
\[
\bigl|\overline{\xi(t,\cdot)}\bigr|
=
\left|\int_{\R}\xi(t,\alpha)\,r(\alpha)\,\dd\alpha\right|
\le
\|\xi(t,\cdot)\|_{L^2_r}.
\]
Therefore
\[
|s^\ast(t)-s(t)|\le |\delta|\,\|\xi(t,\cdot)\|_{L^2_r}.
\qedhere
\]
\end{proof}

The next result identifies the observable residual after de-shifting.

\begin{proposition}[Residual after de-shifting]\label{prop:deshift_residual}
Under the assumptions of \cref{prop:shift_est}, define
\[
\rho(t,\alpha)
:=
\widehat u_{\mathrm{obs}}(t,\alpha)-\widehat u_0(\alpha)-s^\ast(t).
\]
Then
\[
\rho(t,\alpha)
=
\delta\Bigl(\xi(t,\alpha)-\overline{\xi(t,\cdot)}\Bigr).
\]
In particular,
\[
\overline{\rho(t,\cdot)}=0,
\]
so the residual lies in the orthogonal complement of \(\mathrm{span}\{1\}\) in \(L^2_r\).
\end{proposition}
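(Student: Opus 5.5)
The argument is a direct substitution that reuses the explicit formula for \(s^\ast(t)\) from \cref{prop:shift_est}. Fix \(t\). By part 2 of \cref{prop:shift_est}, under the observation model \eqref{eq:obs_model_cdt} with \(\xi(t,\cdot)\in L^2_r\), we have \(s^\ast(t)=s(t)+\delta\,\overline{\xi(t,\cdot)}\). Substituting this and \eqref{eq:obs_model_cdt} into the definition of \(\rho\) gives
\[
\rho(t,\alpha)=\widehat u_0(\alpha)+s(t)+\delta\xi(t,\alpha)-\widehat u_0(\alpha)-s(t)-\delta\,\overline{\xi(t,\cdot)}
=\delta\bigl(\xi(t,\alpha)-\overline{\xi(t,\cdot)}\bigr).
\]
This proves the first claim.

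For the zero-mean claim, I would take the \(r\)-weighted mean of this identity. By linearity of the integral, and using \(\int_\R r\,\dd\alpha=1\) from \cref{ass:ref}, the mean of the constant \(\overline{\xi(t,\cdot)}\) is itself. Hence
\[
\overline{\rho(t,\cdot)}=\delta\bigl(\overline{\xi(t,\cdot)}-\overline{\xi(t,\cdot)}\bigr)=0 .
\]
The only point to check is that \(\xi(t,\cdot)\) is integrable against \(r\), so that the mean is well defined. This follows from \(\xi(t,\cdot)\in L^2_r\) by the Cauchy--Schwarz bound already used in the proof of \cref{prop:shift_est}. Also, \(\rho(t,\cdot)\in L^2_r\), because constants lie in \(L^2_r\) when \(r\) has unit mass.

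Finally, orthogonality follows from the identity \(\langle \rho(t,\cdot),1\rangle_{L^2_r}=\int_\R\rho(t,\alpha)\,r(\alpha)\,\dd\alpha=\overline{\rho(t,\cdot)}\). Since this is zero, \(\rho(t,\cdot)\) is orthogonal to \(\mathrm{span}\{1\}\). There is no real obstacle in this argument; the only care needed is the integrability bookkeeping just described.
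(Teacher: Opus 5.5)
Your proof is correct and follows the paper's argument: you substitute $s^\ast(t)=s(t)+\delta\,\overline{\xi(t,\cdot)}$ from \cref{prop:shift_est} into the definition of $\rho$, then take the $r$-weighted mean using $\int_\R r\,\dd\alpha=1$. Your integrability remarks and the explicit identification $\langle\rho(t,\cdot),1\rangle_{L^2_r}=\overline{\rho(t,\cdot)}$ are small clarifications that the paper leaves implicit.
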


\begin{proof}
By \cref{prop:shift_est},
\[
s^\ast(t)=s(t)+\delta\,\overline{\xi(t,\cdot)}.
\]
Hence
\[
\rho(t,\alpha)
=
\bigl(\widehat u_0(\alpha)+s(t)+\delta\xi(t,\alpha)\bigr)
-\widehat u_0(\alpha)
-\bigl(s(t)+\delta\,\overline{\xi(t,\cdot)}\bigr),
\]
which simplifies to
\[
\rho(t,\alpha)=\delta\Bigl(\xi(t,\alpha)-\overline{\xi(t,\cdot)}\Bigr).
\]
Taking the \(r\)-weighted mean gives
\[
\overline{\rho(t,\cdot)}
=
\delta\Bigl(\overline{\xi(t,\cdot)}-\overline{\xi(t,\cdot)}\Bigr)=0.
\]
\end{proof}

\begin{remark}[What is and is not recovered]
Given a known template \(\widehat u_0\), the observable CDT signal determines:
\begin{itemize}
\item the shift estimate \(s^\ast(t)\),
\item the de-shifted residual \(\rho(t,\alpha)\).
\end{itemize}
It does not determine \(\delta\) and \(\xi\) separately from a single observation, and it does not determine an unknown template \(\widehat u_0\) without further information or multiple observations.
\end{remark}

\begin{algorithm}[h!]
\caption{CDT-Based De-shifting with Known Template}
\label{alg:cdt_deshift}
\begin{algorithmic}[1]
\Require Observed signal(s) \(u_{\mathrm{obs}}(x,t)\), known template \(u_0(x)\), reference density \(r\)
\Ensure Shift estimate(s), de-shifted residual(s), and optionally a cleaned CDT signal or an aligned average

\State Compute the CDT of the template:
$
\widehat u_0(\alpha)
$

\State For each observation, compute the CDT:
$
\widehat u_{\mathrm{obs}}(t,\alpha)
$

\State For each observation, estimate the shift by projection onto the constant mode:
\[
s^\ast(t)=\overline{\widehat u_{\mathrm{obs}}(t,\cdot)}-\overline{\widehat u_0}
\]

\State For each observation, form the de-shifted residual:
\[
\rho(t,\alpha)=\widehat u_{\mathrm{obs}}(t,\alpha)-\widehat u_0(\alpha)-s^\ast(t)
\]

\If{only a single observation is available}
    \State Optionally denoise or smooth \(\rho(t,\cdot)\) to obtain a cleaned residual: 
    $
    \rho_{\mathrm{den}}(t,\alpha)
    $
    \State Form the cleaned CDT signal
    \[
    \widehat u_{\mathrm{clean}}(t,\alpha)=\widehat u_0(\alpha)+s^\ast(t)+\rho_{\mathrm{den}}(t,\alpha)
    \]
    \State Optionally invert the CDT to recover a cleaned physical-space signal
\Else
    \State De-shift each observation:
    \[
    \widehat u_{\mathrm{obs}}^{\mathrm{aligned}}(t,\alpha)
    =
    \widehat u_{\mathrm{obs}}(t,\alpha)-s^\ast(t)
    \]
    \State Average the aligned CDT signals:
    \[
    \widehat u_{\mathrm{avg}}(\alpha)
    =
    \frac{1}{N}\sum_{k=1}^N \widehat u_{\mathrm{obs},k}^{\mathrm{aligned}}(\alpha)
    \]
    \State Interpret \(\widehat u_{\mathrm{avg}}\) as a refined aligned signal or template estimate
    \State Optionally form cleaned individual samples by re-inserting the estimated shifts:
    \[
    \widehat u_{\mathrm{clean},k}(\alpha)=\widehat u_{\mathrm{avg}}(\alpha)+s_k^\ast
    \]
    \State Optionally invert the CDT to recover cleaned physical-space signals
\EndIf
\end{algorithmic}
\end{algorithm}

\begin{remark}
In the single-observation setting, the natural output is a cleaned residual \(\rho_{\mathrm{den}}\) and the corresponding cleaned CDT signal \(\widehat u_{\mathrm{clean}}\). In the multiple-observation setting, the natural output is instead the aligned average
$
\widehat u_{\mathrm{avg}},
$
which may be interpreted as a refined estimate of the common aligned/template CDT signal. Individual cleaned samples may then be obtained by re-inserting the estimated shifts.
\end{remark}

\subsection{Joint Recovery of Template and Shifts from Multiple Observations}

We now consider the practically relevant case in which the template, shifts, and perturbations are all unknown. A single observation is not sufficient to identify all components, so we assume that multiple observations are available and that they share a common underlying template in CDT space. The conceptual pipeline for this joint recovery is illustrated in \cref{fig:joint_pipeline}.

\begin{figure}[h!]
\centering
\begin{tikzpicture}
\begin{axis}[
    width=0.9\textwidth, height=7.5cm,
    title={Joint CDT Recovery Pipeline: From Observations to Template (\cref{alg:joint_cdt_recovery})},
    xlabel={$\alpha$}, ylabel={$\widehat{u}(\alpha)$},
    ticks=none, domain=-3:3, samples=50,
    ymin=-2.0, ymax=3.0, 
    legend style={at={(0.5,-0.2)}, anchor=north, legend columns=-1},
    clip=false
]
    \draw[black, dashed, thin] (axis cs: -3, 0.0) -- (axis cs: 3, 0.0);
    \node at (axis cs: 3.0, 0.0) [black, anchor=west, font=\tiny] {$\overline{\widehat{u}_{\rm est}} = 0$};

    \addplot[black, ultra thick] {1/(1+exp(-x))};
    \node at (axis cs: 0.2, 0.45) [black, anchor=north west, font=\tiny, align=left] {Template Anchor $\widehat{u}_{0,\text{est}}$};
    \draw[->, >=stealth, thin, black] (axis cs: 0.2, 0.45) -- (axis cs: 0.05, 0.5);

    \addplot[blue, opacity=0.5, thick] {1/(1+exp(-x)) + 1.3 + 0.15*sin(deg(6*x))}; 
    \addplot[blue, opacity=0.5, thick] {1/(1+exp(-x)) - 1.2 + 0.1*cos(deg(5*x))};
    \node at (axis cs: -2.8, 2.2) [blue, anchor=west, font=\footnotesize] {Raw Observations $\widehat{u}_{\mathrm{obs},k}$};

    \addplot[red, thick, dashed] {1/(1+exp(-x)) + 0.15*sin(deg(6*x))};
    \addplot[red, thick, dashed] {1/(1+exp(-x)) + 0.1*cos(deg(5*x))};
    \node at (axis cs: 3.5, 0.5) [red, anchor=east, font=\footnotesize] {Aligned Signals ($\widehat{u}_0 + \rho_k$)};


    
    \coordinate (BaseRef) at (axis cs: -1.5, 0);       
    \coordinate (Obs1Ref) at (axis cs: -1.5, 1.48);    
    \draw[<->, >=stealth, darkgray, thick] (BaseRef) -- (Obs1Ref);
    \node[darkgray, anchor=east, font=\footnotesize, xshift=-2pt] at (axis cs: -1.5, 0.74) {Shift $s_k = \overline{\widehat{u}_{\mathrm{obs},k}}$};

    \coordinate (ActionStart) at (axis cs: 0.8, 1.85); 
    \coordinate (ActionEnd)   at (axis cs: 0.8, 0.8);  
    \draw[->, >=stealth, ultra thick, red] (ActionStart) -- (ActionEnd) node[midway, right, font=\footnotesize, text width=2.5cm] {Subtract $\hat{s}_k$(Centering Action)};

    \coordinate (AlignedPoint)  at (axis cs: -0.75, 0.46); 
    \coordinate (TemplatePoint) at (axis cs: -0.75, 0.27); 
    \draw[<->, >=stealth, thick, red] (AlignedPoint) -- (TemplatePoint);
    \node[red, anchor=east, font=\footnotesize, xshift=-2pt] at (axis cs: -0.25, 0.65) {Residual $\rho_k$};

\end{axis}
\end{tikzpicture}
\caption{Conceptual pipeline for joint recovery in CDT space. The centered gauge \(\overline{\widehat u_{0,\mathrm{est}}}=0\) fixes the additive ambiguity between the template and the shifts. Raw observations are first de-shifted using their estimated constant modes, producing an aligned family \(\widehat u_0+\rho_k\). Averaging the aligned signals suppresses the residual fluctuations and reveals the recovered template.}
\label{fig:joint_pipeline}
\end{figure}

Let
\[
\widehat u_{\mathrm{obs},k}(\alpha)
=
\widehat u_0(\alpha)+s_k+\rho_k(\alpha),
\qquad
k=1,\dots,N,
\]
where \(\widehat u_0(\alpha)\) is the unknown common template, \(s_k\) are unknown shifts, and \(\rho_k\) are unknown residuals. 

To separate the transport mode from the residual, we impose the centering condition
\begin{equation}\label{eq:rho_centered_multi}
\overline{\rho_k}=0,
\qquad
k=1,\dots,N.
\end{equation}
As illustrated in \cref{fig:joint_pipeline}, this removes the ambiguity between the constant transport mode and the shape fluctuations. The same figure also provides a visual interpretation of \cref{alg:joint_cdt_recovery}.

Even under \eqref{eq:rho_centered_multi}, the pair \((\widehat u_0,\{s_k\})\) is only defined up to a common additive constant:
\[
\widehat u_0\mapsto \widehat u_0+c,
\qquad
s_k\mapsto s_k-c.
\]
We remove this gauge freedom by requiring
\begin{equation}\label{eq:gauge_zero_mean_template}
\overline{\widehat u_0}=0.
\end{equation}
In computation, the true template \(\widehat u_0\) is unknown, so the same gauge convention is enforced on the recovered estimate by requiring \(\overline{\widehat u_{0,\mathrm{est}}}=0\).

\begin{proposition}[Recovery under centered residuals and fixed gauge]
\label{prop:joint_recovery}
Assume
\[
\widehat u_{\mathrm{obs},k}(\alpha)
=
\widehat u_0(\alpha)+s_k+\rho_k(\alpha),
\qquad
\overline{\rho_k}=0,
\qquad
\overline{\widehat u_0}=0.
\]
Then
\[
\overline{\widehat u_{\mathrm{obs},k}}=s_k,
\qquad
k=1,\dots,N,
\]
and therefore the shifts are recovered by
\[
s_k=\overline{\widehat u_{\mathrm{obs},k}}.
\]
Moreover,
\[
\widehat u_0(\alpha)
=
\frac1N\sum_{k=1}^N
\Bigl(\widehat u_{\mathrm{obs},k}(\alpha)-\overline{\widehat u_{\mathrm{obs},k}}\Bigr)
-\frac1N\sum_{k=1}^N \rho_k(\alpha).
\]
In particular, if the residuals average out,
\[
\frac1N\sum_{k=1}^N \rho_k(\alpha)\approx 0,
\]
then
\[
\widehat u_0(\alpha)
\approx
\frac1N\sum_{k=1}^N
\Bigl(\widehat u_{\mathrm{obs},k}(\alpha)-\overline{\widehat u_{\mathrm{obs},k}}\Bigr).
\]
\end{proposition}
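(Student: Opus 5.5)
The plan is to apply the $r$-weighted mean functional $f\mapsto\bar f$ to the model identity and then rearrange. First, this functional is linear on $L^2_r$. It is also well defined there: by Cauchy--Schwarz and $\int_\R r=1$ we have $|\bar f|\le\|f\|_{L^2_r}$, exactly as in the proof of \cref{prop:shift_est}. Second, it maps the constant function $c$ to $c$, again because $\int_\R r(\alpha)\,\dd\alpha=1$.

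\textbf{Step 1 (shifts).} Take the weighted mean of both sides of
\[
\widehat u_{\mathrm{obs},k}=\widehat u_0+s_k+\rho_k .
\]
Linearity gives
\[
\overline{\widehat u_{\mathrm{obs},k}}=\overline{\widehat u_0}+s_k+\overline{\rho_k}.
\]
The gauge condition \eqref{eq:gauge_zero_mean_template} removes the first term, and the centering condition \eqref{eq:rho_centered_multi} removes the last. Hence $\overline{\widehat u_{\mathrm{obs},k}}=s_k$ for each $k$. This is just the known-template estimator of \cref{prop:shift_est} with $\overline{\widehat u_0}=0$.

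\textbf{Step 2 (template).} Substitute $s_k=\overline{\widehat u_{\mathrm{obs},k}}$ back into the model. This gives, for each $k$ and each $\alpha$,
\[
\widehat u_{\mathrm{obs},k}(\alpha)-\overline{\widehat u_{\mathrm{obs},k}}=\widehat u_0(\alpha)+\rho_k(\alpha).
\]
Averaging over $k=1,\dots,N$ yields
\[
\frac1N\sum_{k=1}^N\Bigl(\widehat u_{\mathrm{obs},k}(\alpha)-\overline{\widehat u_{\mathrm{obs},k}}\Bigr)=\widehat u_0(\alpha)+\frac1N\sum_{k=1}^N\rho_k(\alpha).
\]
Moving the residual average to the other side gives the stated exact identity. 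The approximate statement then follows immediately under the hypothesis that $\frac1N\sum_k\rho_k\approx0$. One can make this quantitative: the error in $L^2_r$ equals $\bigl\|\frac1N\sum_k\rho_k\bigr\|_{L^2_r}$.

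\textbf{Main obstacle.} There is essentially none; the argument is algebraic. The only point needing care is that all the weighted means exist. For this I will assume, implicitly as in \cref{prop:shift_est}, that $\widehat u_0$ and each $\rho_k$ lie in $L^2_r$. Then $\widehat u_{\mathrm{obs},k}\in L^2_r$ as well, since the constants $s_k$ lie in $L^2_r$ because $r$ has unit mass.
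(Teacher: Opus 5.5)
Your proof is correct and follows the paper's own argument essentially verbatim: take the $r$-weighted mean of the model, use the centering and gauge conditions to get $\overline{\widehat u_{\mathrm{obs},k}}=s_k$, subtract this from the model, and average over $k$. Your explicit assumption that $\widehat u_0$ and each $\rho_k$ lie in $L^2_r$, so that the weighted means exist, is a harmless clarification that the paper leaves implicit.
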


\begin{proof}
Taking the \(r\)-weighted mean of
\[
\widehat u_{\mathrm{obs},k}(\alpha)
=
\widehat u_0(\alpha)+s_k+\rho_k(\alpha)
\]
gives
\[
\overline{\widehat u_{\mathrm{obs},k}}
=
\overline{\widehat u_0}+s_k+\overline{\rho_k}.
\]
By \eqref{eq:rho_centered_multi} and \eqref{eq:gauge_zero_mean_template},
\[
\overline{\widehat u_{\mathrm{obs},k}}=s_k.
\]
Subtracting this identity from the observation model yields
\[
\widehat u_{\mathrm{obs},k}(\alpha)-\overline{\widehat u_{\mathrm{obs},k}}
=
\widehat u_0(\alpha)+\rho_k(\alpha).
\]
Averaging over \(k\) gives
\[
\frac1N\sum_{k=1}^N
\Bigl(\widehat u_{\mathrm{obs},k}(\alpha)-\overline{\widehat u_{\mathrm{obs},k}}\Bigr)
=
\widehat u_0(\alpha)+\frac1N\sum_{k=1}^N \rho_k(\alpha),
\]
which proves the result.
\end{proof}

\begin{remark}[Interpretation]
Under the centering condition \(\overline{\rho_k}=0\) and the gauge choice \(\overline{\widehat u_0}=0\), the constant mode of each observation determines the shift, while averaging the de-shifted CDT signals determines the common template up to the average residual. Thus the unknown-template problem reduces to two operations in CDT space:
\begin{enumerate}
\item projection onto \(\mathrm{span}\{1\}\) to estimate shifts,
\item averaging of de-shifted signals to estimate the template.
\end{enumerate}
\end{remark}

The practical recovery procedure mirrors the preceding proposition. One first estimates the shifts from the constant mode of each observed CDT signal, then de-shifts the observations, averages the aligned CDT signals to obtain a preliminary template estimate, and finally enforces the gauge condition by centering the template. Optional denoising may be applied to the aligned average before the final template is fixed.

\begin{algorithm}[h!]
\caption{Joint CDT Recovery of Template and Shifts}
\label{alg:joint_cdt_recovery}
\begin{algorithmic}[1]
\Require Observed signals \(u_{\mathrm{obs},k}(x)\), \(k=1,\dots,N\), reference density \(r\)
\Ensure Estimated template \(\widehat u_{0,\mathrm{est}}\), estimated shifts \(\widehat s_k\), and residuals \(\widehat\rho_k\)

\State Compute the CDT of each observation:
$
\widehat u_{\mathrm{obs},k}(\alpha),
\qquad
k=1,\dots,N
$

\State Estimate the shifts from the constant mode:
$
\widehat s_k=\overline{\widehat u_{\mathrm{obs},k}},
\qquad
k=1,\dots,N
$

\State Form the de-shifted CDT signals:
\[
\widehat u_{\mathrm{obs},k}^{\mathrm{aligned}}(\alpha)
=
\widehat u_{\mathrm{obs},k}(\alpha)-\widehat s_k
\]

\State Form the aligned average:
\[
\widehat u_{0,\mathrm{raw}}(\alpha)
=
\frac1N\sum_{k=1}^N \widehat u_{\mathrm{obs},k}^{\mathrm{aligned}}(\alpha)
\]

\State Optionally denoise or smooth \(\widehat u_{0,\mathrm{raw}}\) to obtain the template estimate: 
$
\widehat u_{0,\mathrm{est}}(\alpha) .
$
If no denoising is performed, set
\[
\widehat u_{0,\mathrm{est}}(\alpha)=\widehat u_{0,\mathrm{raw}}(\alpha)
\]

\State Enforce the gauge \(\overline{\widehat u_{0,\mathrm{est}}}=0\) by centering:
\[
c:=\overline{\widehat u_{0,\mathrm{est}}},
\qquad
\widehat u_{0,\mathrm{est}}(\alpha)\leftarrow \widehat u_{0,\mathrm{est}}(\alpha)-c,
\qquad
\widehat s_k\leftarrow \widehat s_k+c
\]

\State Form the residuals:
\[
\widehat\rho_k(\alpha)
=
\widehat u_{\mathrm{obs},k}(\alpha)-\widehat u_{0,\mathrm{est}}(\alpha)-\widehat s_k
\]

\State Optionally iterate the previous steps if template refinement is desired
\end{algorithmic}
\end{algorithm}

\begin{remark}[Role of the Reference Density]
The weighted means \(\overline{\widehat u_{\mathrm{obs},k}}\) and the orthogonality condition on the residuals are defined relative to the chosen reference density \(r\). Consequently, the numerical values of the estimated shifts and the recovered template depend on this choice. In practice, \(r\) should be chosen so that it captures the relevant region of support of the observed family and provides a meaningful notion of the constant transport mode across the dataset.
\end{remark}

\begin{algorithm}[t]
\caption{Joint CDT Recovery of Template and Shifts}
\label{alg:joint_cdt_recovery}
\begin{algorithmic}[1]
\Require Observed signals \(u_{\mathrm{obs},k}(x)\), \(k=1,\dots,N\), reference density \(r\)
\Ensure Estimated template \(\widehat u_{0,\mathrm{est}}\), estimated shifts \(\widehat s_k\), and residuals \(\widehat\rho_k\)

\State Compute the CDT of each observation:
\[
\widehat u_{\mathrm{obs},k}(\alpha),
\qquad
k=1,\dots,N
\]

\State Estimate the shifts from the constant mode:
\[
\widehat s_k=\overline{\widehat u_{\mathrm{obs},k}},
\qquad
k=1,\dots,N
\]

\State Form the de-shifted CDT signals:
\[
\widehat u_{\mathrm{obs},k}^{\mathrm{aligned}}(\alpha)
=
\widehat u_{\mathrm{obs},k}(\alpha)-\widehat s_k
\]

\State Form the aligned average:
\[
\widehat u_{0,\mathrm{raw}}(\alpha)
=
\frac1N\sum_{k=1}^N \widehat u_{\mathrm{obs},k}^{\mathrm{aligned}}(\alpha)
\]

\State Optionally denoise or smooth \(\widehat u_{0,\mathrm{raw}}\) to obtain the template estimate:
\[
\widehat u_{0,\mathrm{est}}(\alpha)
\]
If no denoising is performed, set
\[
\widehat u_{0,\mathrm{est}}(\alpha)=\widehat u_{0,\mathrm{raw}}(\alpha)
\]

\State Enforce the gauge \(\overline{\widehat u_{0,\mathrm{est}}}=0\) by centering:
\[
c:=\overline{\widehat u_{0,\mathrm{est}}},
\qquad
\widehat u_{0,\mathrm{est}}(\alpha)\leftarrow \widehat u_{0,\mathrm{est}}(\alpha)-c,
\qquad
\widehat s_k\leftarrow \widehat s_k+c
\]

\State Form the residuals:
\[
\widehat\rho_k(\alpha)
=
\widehat u_{\mathrm{obs},k}(\alpha)-\widehat u_{0,\mathrm{est}}(\alpha)-\widehat s_k
\]

\State Optionally invert the final template estimate \(\widehat u_{0,\mathrm{est}}\) and/or the aligned CDT signals \(\widehat u_{\mathrm{obs},k}^{\mathrm{aligned}}\) to recover a physical-space template estimate and aligned physical-space signals

\State Optionally iterate the previous steps if template refinement is desired
\end{algorithmic}
\end{algorithm}

\begin{remark}
The centering step in \cref{alg:joint_cdt_recovery} does not require prior knowledge of \(\widehat u_0\). It enforces the computational gauge \(\overline{\widehat u_{0,\mathrm{est}}}=0\), which is the algorithmic analogue of the theoretical convention \(\overline{\widehat u_0}=0\), while compensating the estimated shifts by the same constant.
\end{remark}

\section{SCDT Recovery for Signed Signals}
\label{sec:scdt}

The CDT-based recovery procedures developed above are tailored to nonnegative normalized densities. For signed signals, these assumptions no longer apply, and it is natural to work instead with the signed cumulative distribution transform (SCDT). The SCDT represents a signal by separating its positive and negative parts and applying transport-based coordinates to each part. In this way, it extends the transport viewpoint of the CDT to signals that change sign.

More precisely, let \(f\) be a signed signal, and write
\[
f=f^+-f^-,
\qquad
f^\pm\ge 0,
\qquad
f^+f^-=0,
\]
where \(f^+\) and \(f^-\) denote the positive and negative parts of \(f\). The SCDT feature map, denoted by
\[
\mathcal S(f),
\]
is obtained by applying CDT-type transport coordinates to the normalized positive and negative components together with their associated masses. Thus \(\mathcal S(f)\) should be understood as a transport feature representation of the signed signal \(f\), built from the geometry of \(f^+\) and \(f^-\).

In the density setting, translation becomes an exact additive constant mode in CDT space. In the signed-signal setting, no comparably simple closed-form formula is available. Nevertheless, translated signals often remain well organized in SCDT feature space, which motivates the use of SCDT for shift estimation and alignment. The procedures below should therefore be viewed as numerical SCDT-based analogues of the CDT recovery algorithms: shifts are estimated by matching SCDT features, the signals are aligned in physical space, and aligned averages are used to recover a common template.

\subsection{Known-Template Recovery}

Let \(f_0(x)\) be a known signed template, and suppose that the observed signals satisfy
\[
f_{\mathrm{obs},k}(x)=f_0(x-s_k)+\eta_k(x),
\qquad
k=1,\dots,N,
\]
where \(s_k\in\R\) are unknown shifts and \(\eta_k\) denotes additive noise. A natural estimator of the shift is obtained by matching each observed signal to the known template over a discrete candidate shift grid \(\mathcal G\subset\R\).

For each observation, we define
\[
\widehat s_k
=
\arg\min_{s\in\mathcal G}
\left\|
\mathcal S\bigl(f_{\mathrm{obs},k}(\cdot+s)\bigr)-\mathcal S(f_0)
\right\|,
\]
and then align the signal in physical space by
\[
f_k^{\mathrm{aligned}}(x)=f_{\mathrm{obs},k}(x+\widehat s_k).
\]

\begin{algorithm}[h!]
\caption{SCDT-Based Shift Recovery with Known Template}
\label{alg:scdt_known}
\begin{algorithmic}[1]
\Require Observed signed signals \(f_{\mathrm{obs},k}(x)\), known template \(f_0(x)\), candidate shift grid \(\mathcal G\)
\Ensure Estimated shifts \(\widehat s_k\) and aligned signals \(f_k^{\mathrm{aligned}}\)

\State Compute the SCDT feature of the template:
$\mathcal S(f_0)$

\For{each observation \(k=1,\dots,N\)}
    \For{each candidate shift \(s\in\mathcal G\)}
        \State Compute the mismatch
        \[
        J_k(s)=
        \left\|
        \mathcal S\bigl(f_{\mathrm{obs},k}(\cdot+s)\bigr)-\mathcal S(f_0)
        \right\|
        \]
    \EndFor
    \State Estimate the shift by discrete search:
    \[
    \widehat s_k=\arg\min_{s\in\mathcal G} J_k(s)
    \]
    \State Align the signal in physical space:
    \[
    f_k^{\mathrm{aligned}}(x)=f_{\mathrm{obs},k}(x+\widehat s_k)
    \]
\EndFor
\end{algorithmic}
\end{algorithm}

\subsection{Unknown-Template Recovery}

We next consider the case in which the common signed template is also unknown. We again assume a family of translated noisy observations
\[
f_{\mathrm{obs},k}(x)=f_0(x-s_k)+\eta_k(x),
\qquad
k=1,\dots,N,
\]
but now both \(f_0\) and \(s_k\) must be estimated from the data. In the absence of a closed-form projection formula, we use an alternating align-and-average procedure.

Starting from an initial template estimate \(f_{\mathrm{temp}}^{(0)}\), we repeat the following steps. At iteration \(m\), shifts are estimated by matching each observation to the current template in SCDT feature space:
\[
\widehat s_k^{(m)}
=
\arg\min_{s\in\mathcal G}
\left\|
\mathcal S\bigl(f_{\mathrm{obs},k}(\cdot+s)\bigr)-\mathcal S\bigl(f_{\mathrm{temp}}^{(m)}\bigr)
\right\|.
\]
The observations are then aligned in physical space by
\[
f_{k,\mathrm{aligned}}^{(m)}(x)=f_{\mathrm{obs},k}(x+\widehat s_k^{(m)}),
\]
and the template is updated by averaging:
\[
f_{\mathrm{temp}}^{(m+1)}(x)=\frac1N\sum_{k=1}^N f_{k,\mathrm{aligned}}^{(m)}(x).
\]
Unlike the CDT density setting, the SCDT recovery procedures do not yield an explicit shift formula. Instead, the shifts are approximated numerically by minimizing an SCDT feature mismatch over a prescribed candidate shift grid \(\mathcal G\). The quality of recovery therefore depends in part on choosing \(\mathcal G\) wide enough to contain the expected translations and fine enough to resolve the minimizer.

In practice, the initial template \(f_{\mathrm{temp}}^{(0)}\) may be chosen as the direct average of the observed signals,
\[
f_{\mathrm{temp}}^{(0)}(x)
=
\frac1N\sum_{k=1}^N f_{\mathrm{obs},k}(x),
\]
or by any other coarse representative of the dataset. In the numerical experiments below, we use the direct average as the initialization.

\begin{algorithm}[h!]
\caption{SCDT-Based Joint Recovery with Unknown Template}
\label{alg:scdt_unknown}
\begin{algorithmic}[1]
\Require Observed signed signals \(f_{\mathrm{obs},k}(x)\), candidate shift grid \(\mathcal G\), initial template estimate \(f_{\mathrm{temp}}^{(0)}\), number of iterations \(M\)
\Ensure Estimated template \(f_{\mathrm{temp}}^{(M)}\), estimated shifts \(\widehat s_k^{(M-1)}\), and aligned signals

\For{\(m=0,\dots,M-1\)}
    \State Compute the SCDT feature of the current template:
    $
    \mathcal S\bigl(f_{\mathrm{temp}}^{(m)}\bigr)
    $ 

    \For{each observation \(k=1,\dots,N\)}
        \For{each candidate shift \(s\in\mathcal G\)}
            \State Compute the mismatch
            \[
            J_k^{(m)}(s)=
            \left\|
            \mathcal S\bigl(f_{\mathrm{obs},k}(\cdot+s)\bigr)
            -
            \mathcal S\bigl(f_{\mathrm{temp}}^{(m)}\bigr)
            \right\|
            \]
        \EndFor
        \State Estimate the shift by discrete search:
        \[
        \widehat s_k^{(m)}=\arg\min_{s\in\mathcal G} J_k^{(m)}(s)
        \]
        \State Align the signal in physical space:
        \[
        f_{k,\mathrm{aligned}}^{(m)}(x)=f_{\mathrm{obs},k}(x+\widehat s_k^{(m)})
        \]
    \EndFor

    \State Update the template by averaging aligned signals:
    $
    f_{\mathrm{temp}}^{(m+1)}(x)=\frac1N\sum_{k=1}^N f_{k,\mathrm{aligned}}^{(m)}(x)
    $
\EndFor
\end{algorithmic}
\end{algorithm}

\begin{remark}
The output of the SCDT algorithms is already expressed in physical space through the aligned signals \(f_k^{\mathrm{aligned}}\) and, in the unknown-template case, the averaged template estimate \(f_{\mathrm{temp}}^{(M)}\). Thus no inverse-transform post-processing is required in the way it was for the CDT density experiments.
\end{remark}

\section{Numerics}
\label{sec:numerics}

We organize the numerical experiments in five parts. We first illustrate the geometric linearization of transport-dominated variability under the CDT. We then validate the first-order CDT perturbation formula in a semi-analytic setting. Next, we study CDT-based shift and template recovery in the known-template and unknown-template settings. Finally, we examine SCDT-based recovery for translated signed signals.

\subsection{Transport Linearization}

Before turning to perturbation effects, we first illustrate the geometric advantage of the CDT for transport-dominated data. We consider a translated Gaussian-mixture family with Gaussian reference density
\[
r(\alpha)=\mathcal N(0,2.5^2).
\]
The translated snapshots are generated numerically on a finite grid by interpolation and renormalization, and the CDT is computed by numerical CDF inversion. In physical space, the translated snapshots remain strongly nonlinear as functions of the spatial variable, whereas in CDT space the transformed snapshots form an almost affine family.

To quantify this, we compare the computed CDT snapshots with the ideal affine model
\[
\widehat u_s(\alpha)=\widehat u_0(\alpha)+s,
\]
and measure the relative translation-affine error
\[
\frac{
\left(
\sum_{j=1}^{31}
\|\widehat u_{s_j}-(\widehat u_0+s_j)\|_2^2
\right)^{1/2}
}{
\left(
\sum_{j=1}^{31}
\|\widehat u_{s_j}\|_2^2
\right)^{1/2}
},
\]
where \(s_1,\dots,s_{31}\) are sampled uniformly from \([-1.5,1.5]\), and the norm is evaluated numerically on the sampled \(\alpha\)-grid. The resulting relative translation-affine error is approximately \(1.16\times 10^{-2}\), indicating that the transformed snapshots lie very close to an affine line in CDT space.

\begin{figure}[h!]
    \centering
    \includegraphics[width=\textwidth]{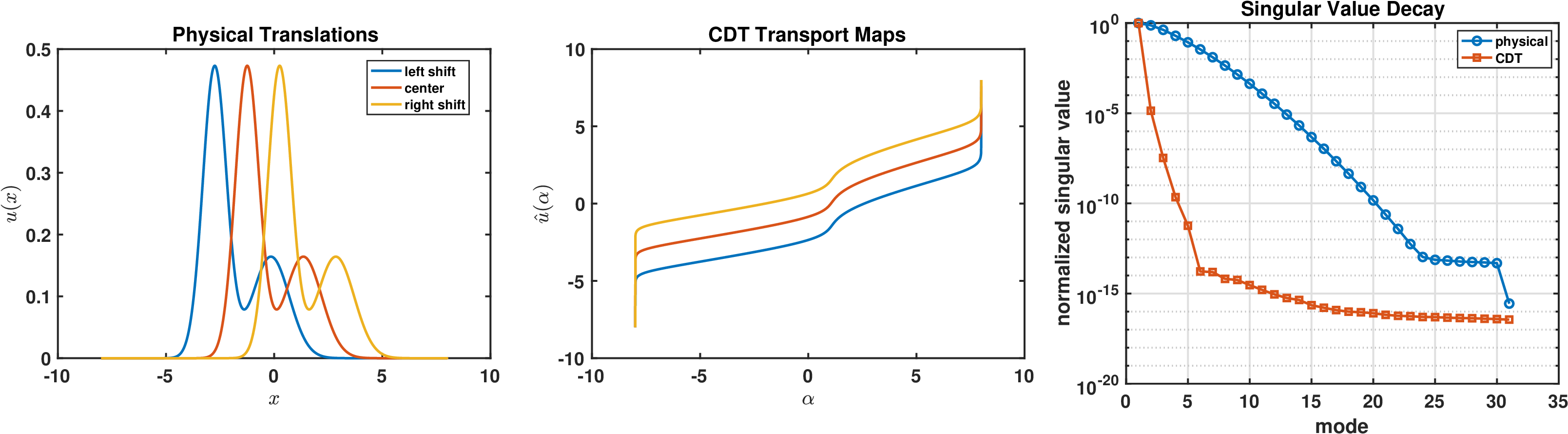}
    \caption{Geometric effect of the CDT on a translated Gaussian-mixture family. Left: representative translated densities in physical space. Center: the corresponding CDT transport maps, which are nearly affine under translation. Right: normalized singular values of the mean-centered snapshot matrices in physical space and in CDT space. The CDT representation is substantially more low-dimensional.}
    \label{fig:cdt_geometry}
\end{figure}

This geometric simplification is reflected in the singular value decay of the corresponding mean-centered snapshot matrices (see \cref{fig:cdt_geometry}). In physical space, the first three singular values are
\[
1.434\times 10^{1},\qquad 1.071\times 10^{1},\qquad 6.029,
\]
whereas in CDT space they are
\[
2.227\times 10^{2},\qquad 3.042\times 10^{-3},\qquad 7.367\times 10^{-6}.
\]
Thus the transformed family is dramatically more low-dimensional, which explains why CDT-based representations are effective for transport-dominated variability. This low-dimensional geometric structure provides the baseline against which the effect of additive perturbations can be assessed.

\subsection{Validation of the First-Order Perturbation Formula}

We next validate the first-order CDT perturbation formula (cf.~\cref{thm:cdt_noise}) in a semi-analytic setting where the quantile equation can be solved accurately and the asymptotic scaling can be examined directly. We consider a translated Gaussian density with mean \(0.6\) and unit variance, together with standard Gaussian reference density, so that the unperturbed CDT is known exactly. A smooth zero-mass perturbation is introduced through a primitive \(E(x)\), and, for each \(\alpha\), the perturbed transport map is obtained by numerically solving the quantile equation
\[
U_\delta(x)=U(x)+\delta E(x),
\qquad
U_\delta(\hat u_\delta(\alpha))=R(\alpha).
\]
The first-order theory predicts
\[
\hat u_\delta(\alpha)-\hat u(\alpha)
=
\delta \xi(\alpha)+O(\delta^2),
\quad \mbox{with} \quad
\xi(\alpha):=-\frac{E(\hat u(\alpha))}{u(\hat u(\alpha))}.
\]

\begin{figure}[h!]
    \centering
    \includegraphics[width=0.6\textwidth]{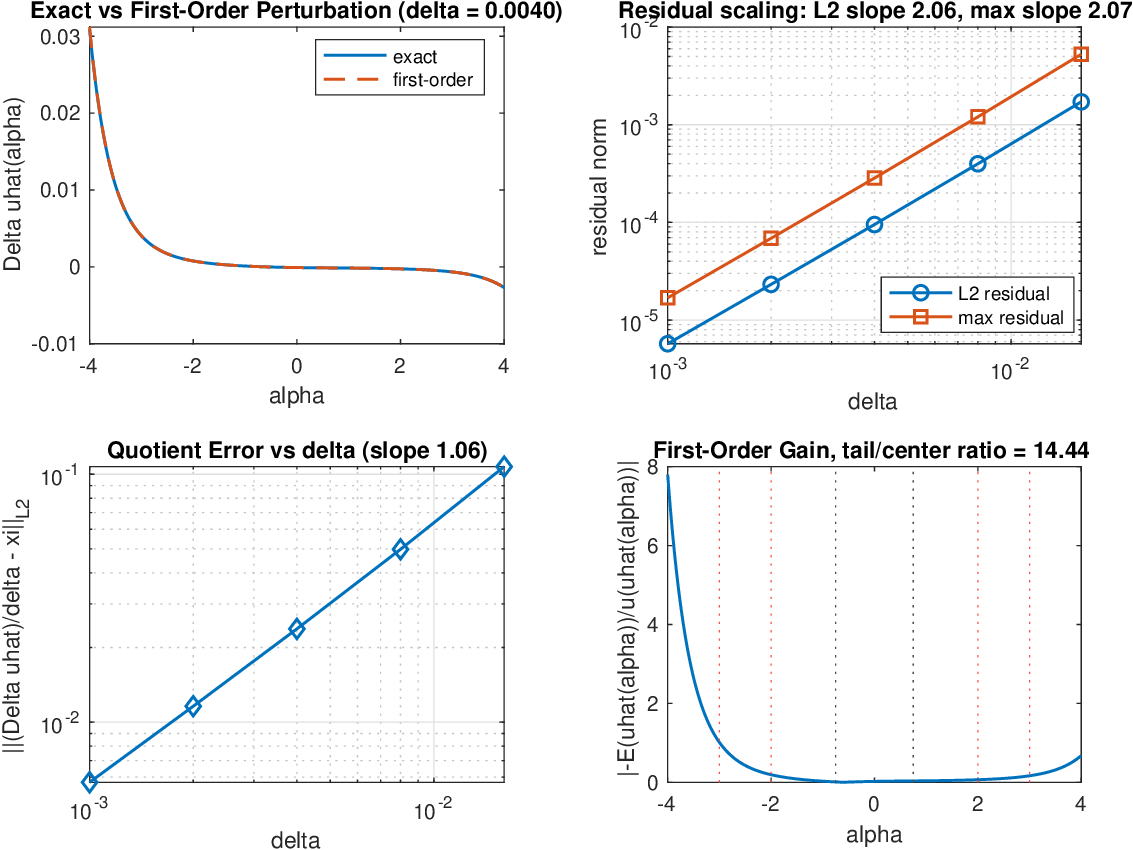}
    \caption{Semi-analytic validation of the first-order CDT perturbation formula. Top left: exact and first-order predicted perturbations for a representative value of \(\delta\). Top right: \(L^2\) and \(L^\infty\) residual norms versus \(\delta\), showing approximately quadratic scaling. Bottom left: quotient error \(\|(\hat u_\delta-\hat u)/\delta-\xi\|_{L^2}\) versus \(\delta\), showing the expected linear scaling. Bottom right: first-order gain factor, illustrating strong tail amplification in low-density regions.}
    \label{fig:cdt_small_noise_analytic}
\end{figure}

Figure~\ref{fig:cdt_small_noise_analytic} compares the exact perturbation with the first-order prediction and reports the scaling of the residual \(\hat u_\delta-\hat u-\delta\xi\) as \(\delta\to 0\). The numerically observed residuals satisfy
\[
\|\hat u_\delta-\hat u-\delta\xi\|_{L^2}=O(\delta^2),
\qquad
\|\hat u_\delta-\hat u-\delta\xi\|_{L^\infty}=O(\delta^2),
\]
with measured slopes \(2.056\) and \(2.070\), respectively. Equivalently, the quotient error \(\|(\hat u_\delta-\hat u)/\delta-\xi\|_{L^2}\) 
scales as \(O(\delta)\), with measured slope \(1.056\). Here all norms are evaluated numerically on the sampled \(\alpha\)-grid. These results give sharp numerical confirmation of the first-order perturbation formula.

The same experiment also illustrates the density-weighted amplification predicted by the theory. The gain factor
$
\left|\frac{E(\hat u(\alpha))}{u(\hat u(\alpha))}\right|
$
is substantially larger in the tails than near the center of the distribution, with a measured tail-to-center ratio of approximately \(14.4\). Here the ratio compares the average magnitude of the gain factor on the tail region \(|\alpha|\in[2,3]\) with its average magnitude on the central region \(|\alpha|\le 0.75\). Thus, even for smooth perturbations, the CDT is markedly more sensitive in low-density regions.

\subsection{Projection-Based Shift Recovery with Known Template}

We first consider the known-template setting of \cref{alg:cdt_deshift}, where the reference template \(u_0\) is assumed available and only the transport shift must be estimated from noisy observations. The experiments use a Gaussian reference density 
$r(\alpha)=\mathcal N(0,2.5^2)$, 
together with a spatial/transform grid of 2001 points on \([-8,8]\). For each template, we generate \(N=21\) observations with evenly spaced shifts
\[
s_k\in[-1,1],
\]
and we report results at SNR levels \(\infty\), \(20\), \(10\), and \(0\) dB, where \(\infty\) dB denotes the noiseless case.

Each observation is constructed from a translated density of the form
\[
u_{\mathrm{obs},k}(x)=u_0(x-s_k)+\varepsilon_k(x),
\]
where \(\varepsilon_k\) is a smooth zero-mass perturbation scaled to the prescribed SNR. To preserve admissibility of the noisy densities, the perturbation is clipped when necessary to maintain nonnegativity, and the result is then renormalized. Figure~\ref{fig:cdt_shift_known} shows the Gaussian-mixture example.

\begin{figure}[h!]
    \centering
    \includegraphics[width=0.9\textwidth]{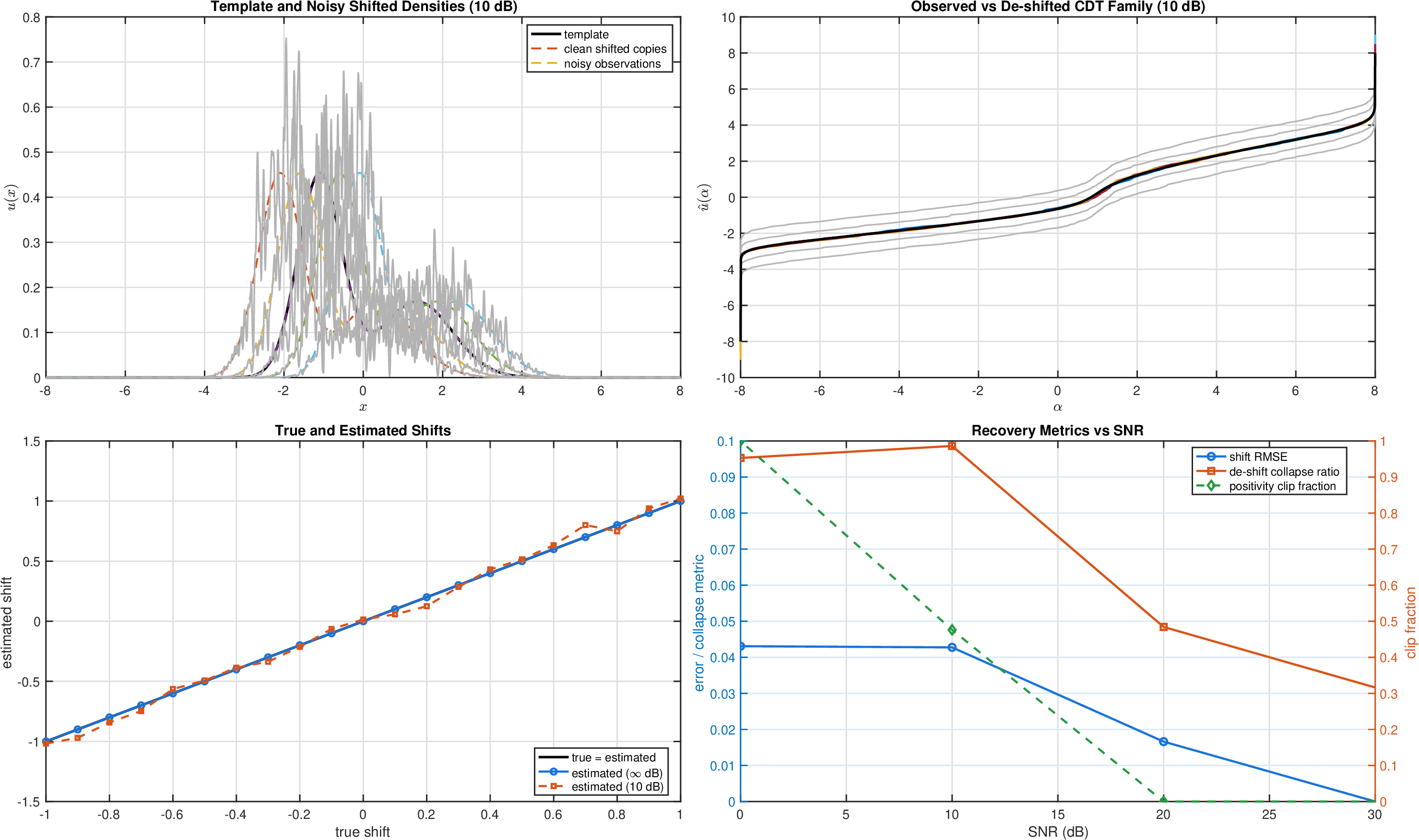}
    \caption{Known-template shift recovery in CDT space for a Gaussian-mixture template. Top left: the common template, several clean shifted copies, and the corresponding noisy observations in physical space. Top right: observed CDT snapshots and the de-shifted CDT family after projection onto the constant mode. Bottom left: true versus estimated shifts. Bottom right: shift RMSE, de-shift collapse ratio, and positivity-clip fraction versus SNR.}
    \label{fig:cdt_shift_known}
\end{figure}

Figure~\ref{fig:cdt_shift_known} illustrates the recovery mechanism for a Gaussian-mixture template. The top-left panel shows the common template, several clean shifted copies, and the corresponding noisy observations in physical space. The top-right panel then shows the observed CDT family together with the de-shifted CDT signals obtained by subtracting the projection-based shift estimates
\[
s_k^\ast=\overline{\widehat u_{\mathrm{obs},k}}-\overline{\widehat u_0}.
\]
As predicted by the theory, the aligned CDT family collapses toward the template after the constant transport mode is removed.

The bottom-left panel compares the true and estimated shifts. In the noiseless case, the recovery is exact up to numerical precision, while at moderate noise levels the estimates remain close to the diagonal. The bottom-right panel summarizes the behavior as the noise level increases. For the Gaussian-mixture example, the shift RMSE remains small from \(\infty\) dB down to \(10\) dB, while the de-shift collapse ratio also stays low, confirming that alignment continues to remove the dominant transport variability. At the strongest noise level, positivity clipping becomes active in order to preserve admissibility of the noisy densities, so this regime should be interpreted as a stress test of the CDT-based density model rather than an unconstrained additive perturbation model.

To visualize the effect of alignment back in physical space, Figure~\ref{fig:cdt_shift_known_inverse} compares the average noisy density
$
\frac1N\sum_{k=1}^N u_{\mathrm{obs},k}(x)
$
with the physical-space signal obtained by first averaging the de-shifted CDT observations and then applying the inverse CDT:
\[
u_{\mathrm{avg}}(x)
=
\mathrm{CDT}^{-1}\!\left(
\frac1N\sum_{k=1}^N \bigl(\widehat u_{\mathrm{obs},k}-s_k^\ast\bigr)
\right).
\]
Direct averaging of the shifted noisy observations visibly blurs the template, whereas averaging after CDT-based alignment produces a substantially sharper physical-space profile. This provides a concrete interpretation of the projection-based de-shifting procedure.

\begin{figure}[h!]
    \centering
    \includegraphics[width=0.45\textwidth]{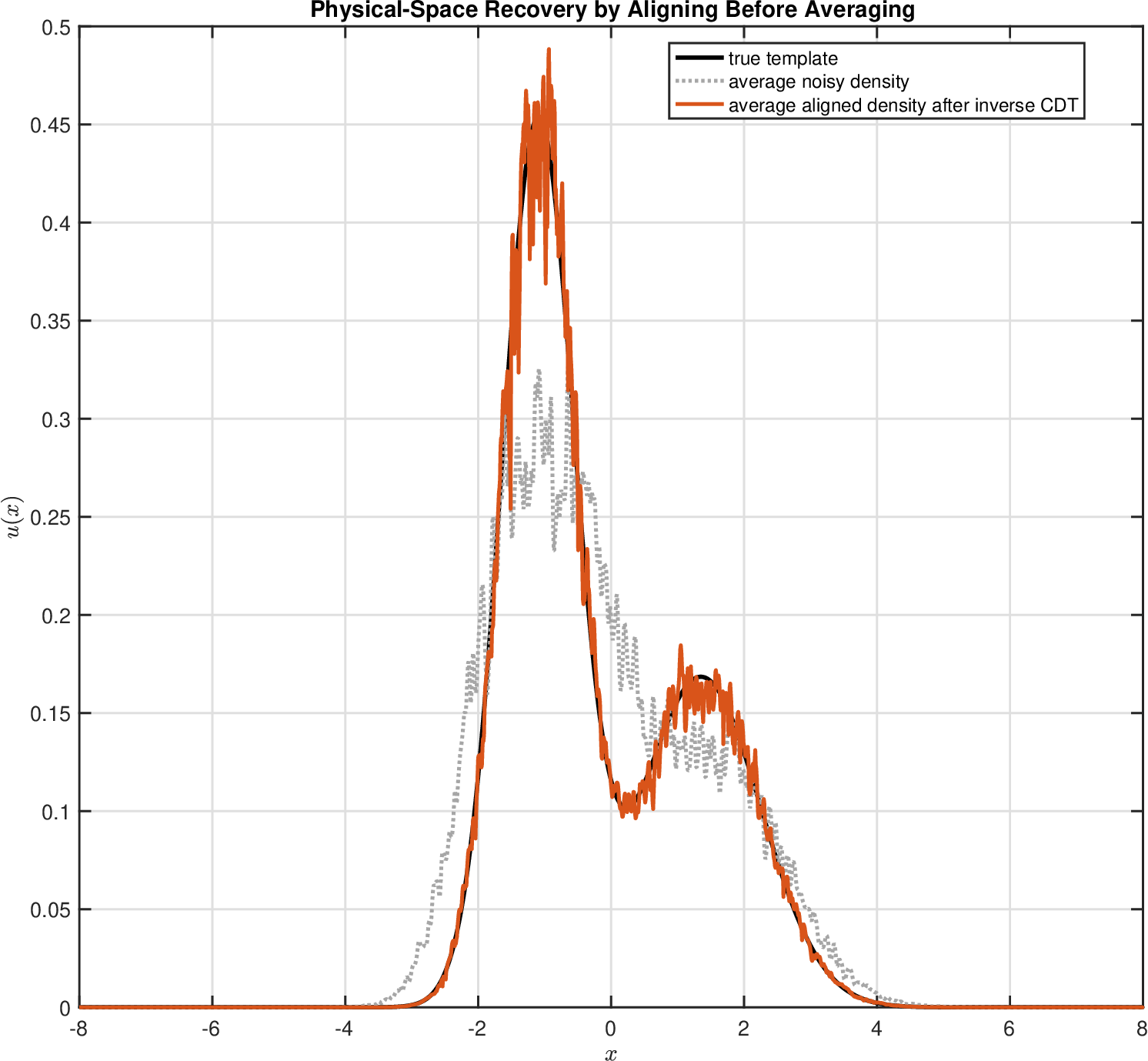}
    \caption{Physical-space interpretation of CDT-based de-shifting in the known-template setting. The black curve is the true template, the gray curve is the direct average of the noisy shifted densities, and the orange curve is obtained by first averaging the de-shifted CDT signals and then applying the inverse CDT. }

    \label{fig:cdt_shift_known_inverse}
\end{figure}

\subsection{Joint Recovery of Template and Shifts from Multiple Observations}

We next consider the unknown-template setting, where neither the common template nor the individual shifts are assumed known a priori. As in the known-template experiments, we use a Gaussian reference density
$
r(\alpha)=\mathcal N(0,2.5^2),
$
together with a spatial/transform grid of 2001 points on \([-8,8]\). For each template, we generate \(N=21\) observations with evenly spaced shifts
\[
s_k\in[-1,1],
\]
and we report results at SNR levels \(\infty\), \(20\), \(10\), and \(0\) dB, where \(\infty\) dB denotes the noiseless case. The observations are generated from translated densities with smooth zero-mass perturbations, followed by positivity clipping and renormalization when needed.

In this setting, we apply the de-shift--and--average strategy of \cref{alg:joint_cdt_recovery}: shifts are estimated from the constant mode of each CDT observation, the family is aligned in CDT space, and the common template is estimated by averaging the aligned signals under the centered gauge \(\overline{\widehat u_{0,\mathrm{est}}}=0\). In the synthetic experiments, the true template is used only to generate the data and to evaluate the recovery error; it is not used by the recovery algorithm itself.

\begin{figure}[h!]
    \centering
    \includegraphics[width=0.8\textwidth]{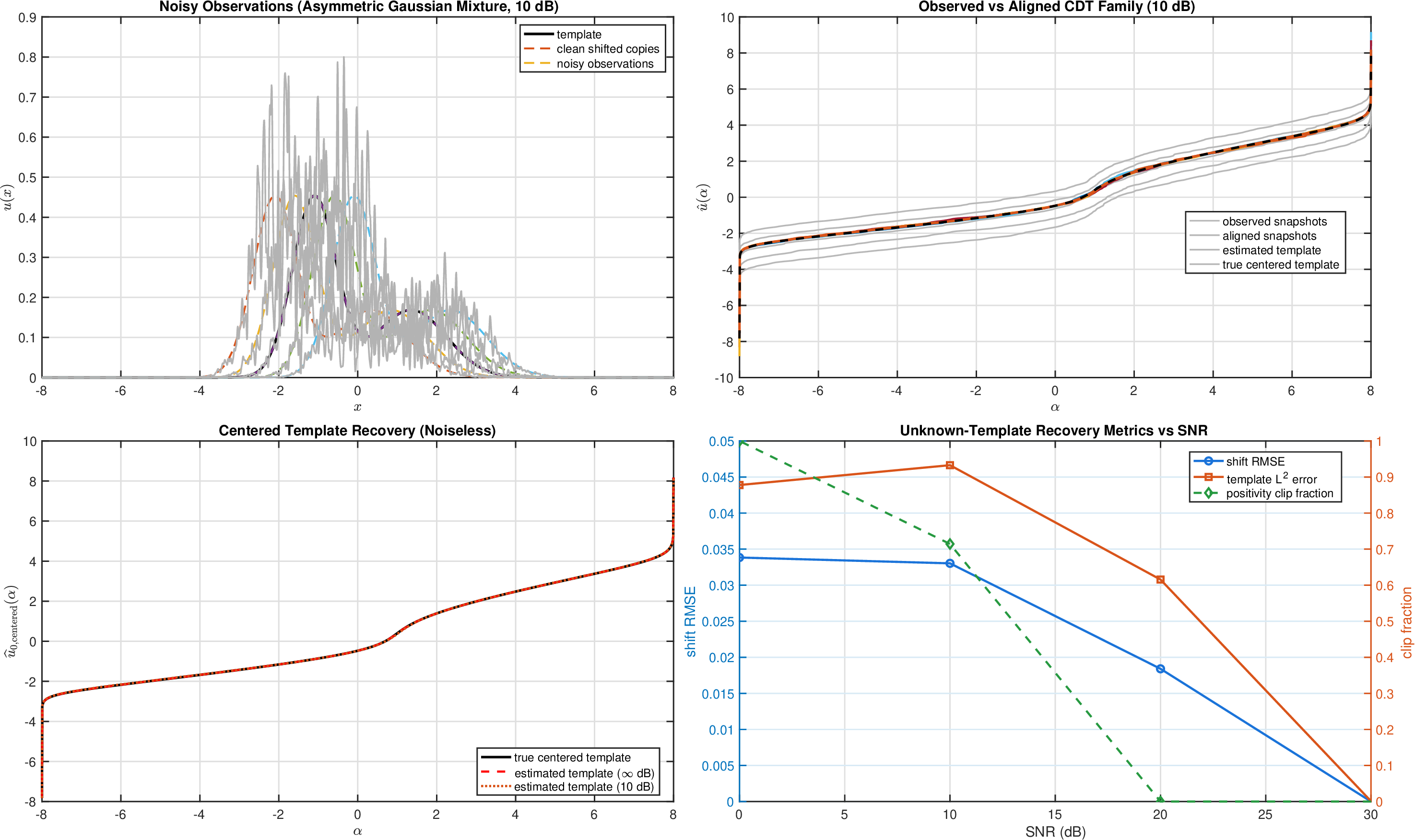}
    \caption{Unknown-template shift and template recovery in CDT space for a Gaussian-mixture family. Top left: the common template, several clean shifted copies, and noisy observations in physical space. Top right: observed CDT snapshots, aligned CDT snapshots, the recovered centered template, and the true centered template. Bottom left: true versus recovered centered template. Bottom right: shift RMSE, template \(L^2\) error, and positivity-clip fraction versus SNR.}
    \label{fig:cdt_shift_unknown}
\end{figure}

Figure~\ref{fig:cdt_shift_unknown} shows the resulting recovery behavior for the same Gaussian-mixture family. The top-left panel displays several noisy observations in physical space together with the underlying template and representative clean shifted copies. The top-right panel shows the observed CDT family, the aligned CDT family after shift removal, the recovered centered template, and the true centered template in CDT space. As in the known-template case, alignment removes the dominant transport variability and produces a markedly more coherent family.

The bottom-left panel compares the true centered template with the recovered template. In the noiseless case, the reconstruction is exact up to numerical precision; under moderate noise, the estimated template remains close to the truth, with the main deviations occurring in the strongest-noise regime. The bottom-right panel summarizes the shift and template errors as functions of the SNR. For the Gaussian-mixture example, both the shift RMSE and the template \(L^2\) error remain moderate down to \(10\) dB, again with positivity clipping becoming significant in the strongest-noise cases.

Figure~\ref{fig:cdt_shift_unknown_inverse} gives the corresponding physical-space interpretation. The direct average of the noisy shifted observations is blurred by transport variability, whereas the inverse transform of the raw aligned CDT average is noticeably closer to the true template. The recovered template obtained from the unknown-template algorithm, after gauge centering in CDT space and inverse transformation, is likewise close to the true physical-space template. Thus the same alignment principle that enables shift estimation also supports meaningful template reconstruction from multiple observations.

\begin{figure}[h!]
    \centering
    \includegraphics[width=0.45\textwidth]{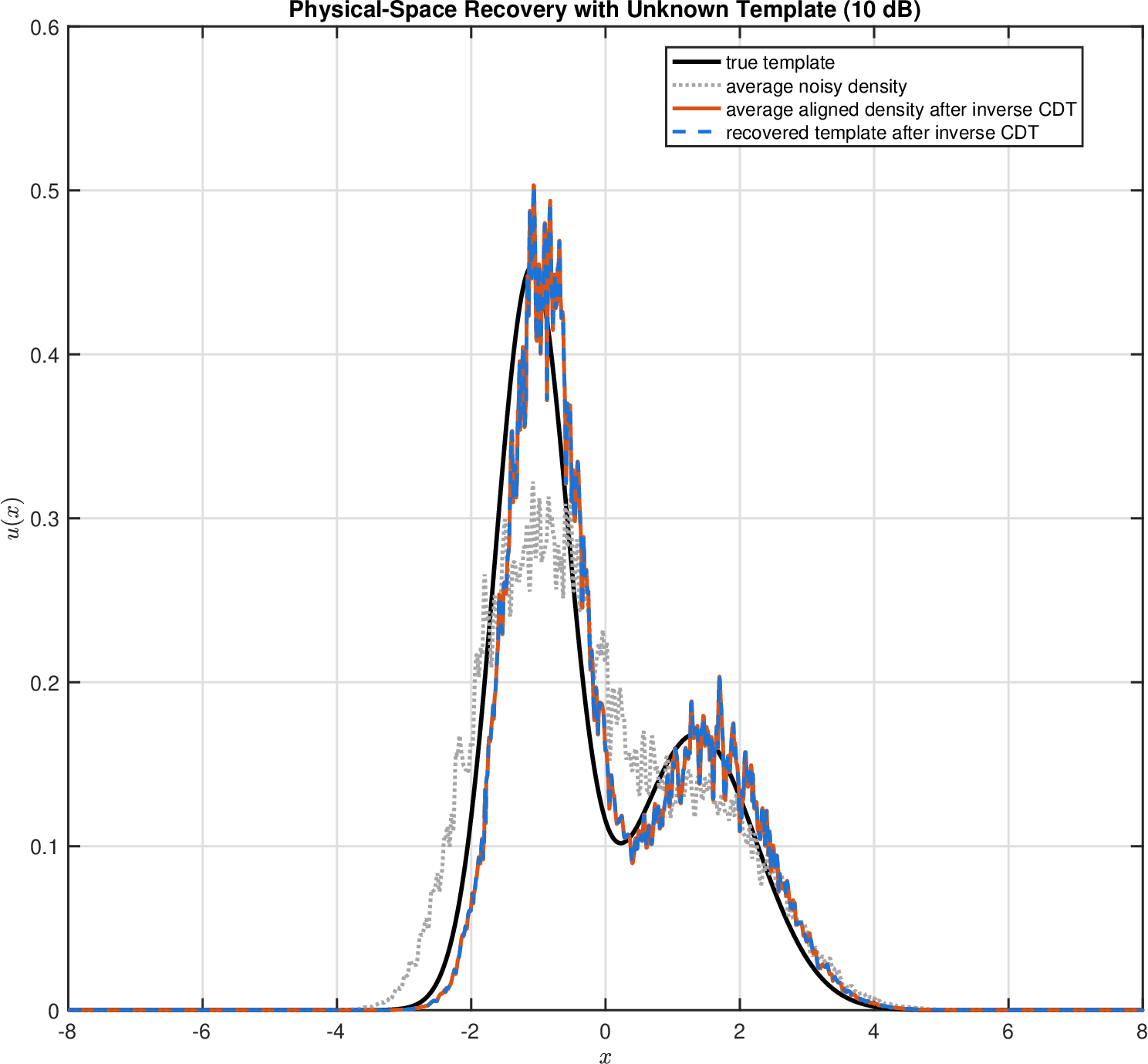}
    \caption{Physical-space interpretation of the unknown-template recovery procedure. The black curve is the true template, the gray curve is the direct average of the noisy shifted observations, the orange curve is obtained by first averaging the aligned CDT signals and then applying the inverse CDT, and the blue dashed curve is the inverse CDT of the final recovered template estimate after gauge centering. Alignment before averaging yields a substantially sharper estimate than direct averaging, and the recovered template remains close to the true physical-space template.}

    \label{fig:cdt_shift_unknown_inverse}
\end{figure}

\subsection{SCDT Recovery for Signed Signals}

We next illustrate the SCDT-based recovery procedures of \cref{alg:scdt_known,alg:scdt_unknown} on translated signed signals with additive noise. We consider three representative signal classes on a uniform grid: a localized oscillatory Gabor signal, an asymmetric sawtooth signal, and a square-wave-type signal with sharp discontinuities. In all cases, translated observations are generated with shifts
\[
s_k\in[-0.13,0.13],
\]
sampled uniformly, and additive noise is added at prescribed SNR levels.

Since the unknown-template setting is the more demanding recovery problem, we focus here on that case. The corresponding known-template experiments exhibit similar alignment, so we omit them for brevity.

Representative unknown-template results for the Gabor signal are shown in \cref{fig:scdt_shift_unknown_gabor}. The template and shifts are recovered jointly by alternating SCDT-based matching and averaging of aligned signals. The dominant translation structure is recovered accurately over a broad SNR range, and the aligned average produces a meaningful template estimate. The corresponding signal-space interpretations for the Gabor, sawtooth, and square signals are shown in \cref{fig:scdt_shift_unknown_gabor_inverse}, where the aligned average is compared with the direct average of the noisy shifted observations.

To compare performance across all three signal classes, \cref{fig:scdt_shift_unknown_summary} reports the shift RMSE and template \(L^2\) error as functions of the SNR. The Gabor and sawtooth signals exhibit the strongest recovery behavior, while the square signal is the most challenging case, especially in the stronger-noise regime.

\begin{figure}[h!]
    \centering
    \includegraphics[width=0.9\textwidth]{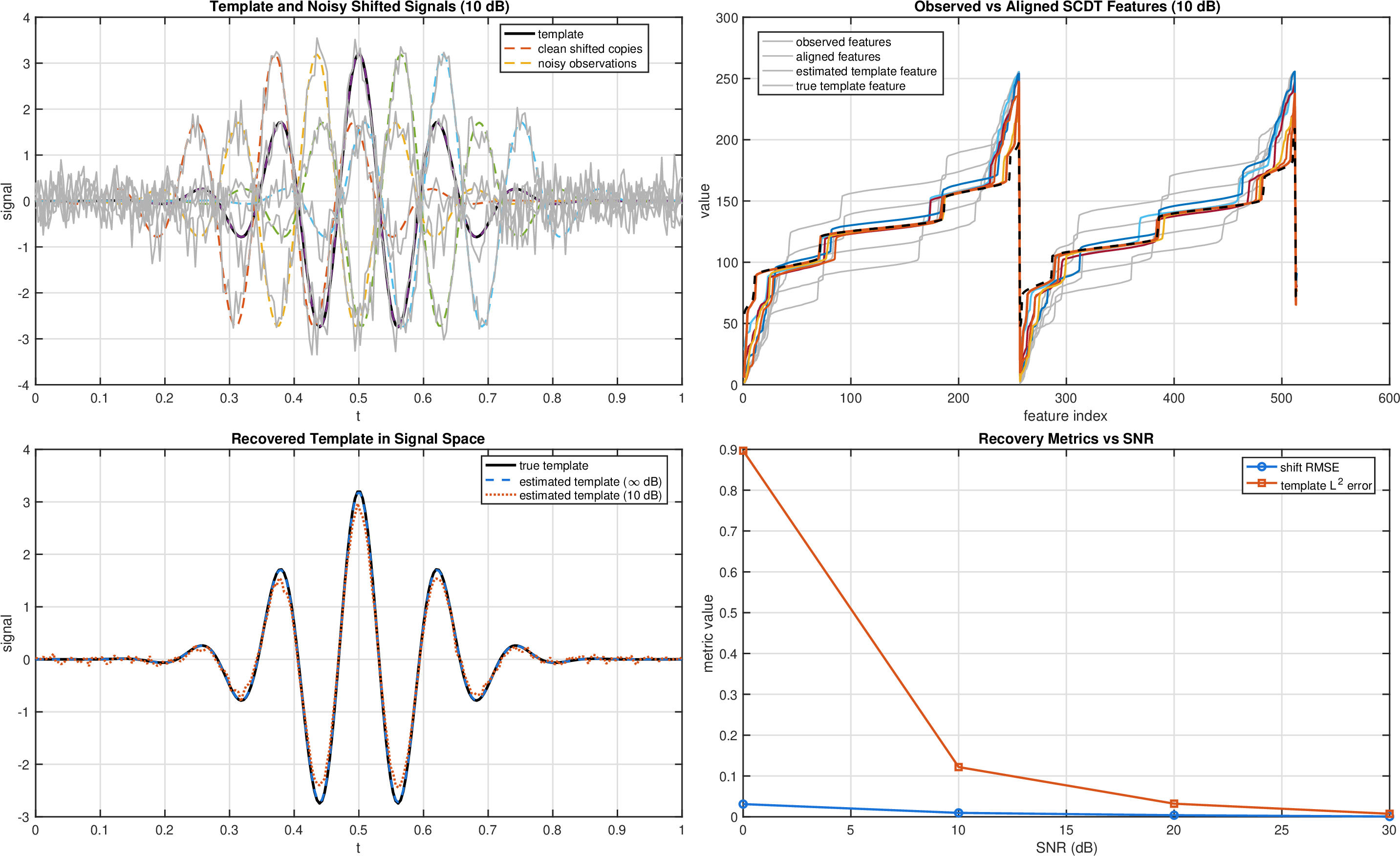}
    \caption{Unknown-template SCDT recovery for a translated Gabor signal. Top left: template, clean shifted copies, and noisy observations in signal space. Top right: observed and aligned SCDT features together with the estimated and true template features. Bottom left: recovered template in signal space. Bottom right: shift RMSE and template \(L^2\) error versus SNR.}
    \label{fig:scdt_shift_unknown_gabor}
\end{figure}

\begin{figure}[h!]
    \centering
    \includegraphics[width=0.32\textwidth]{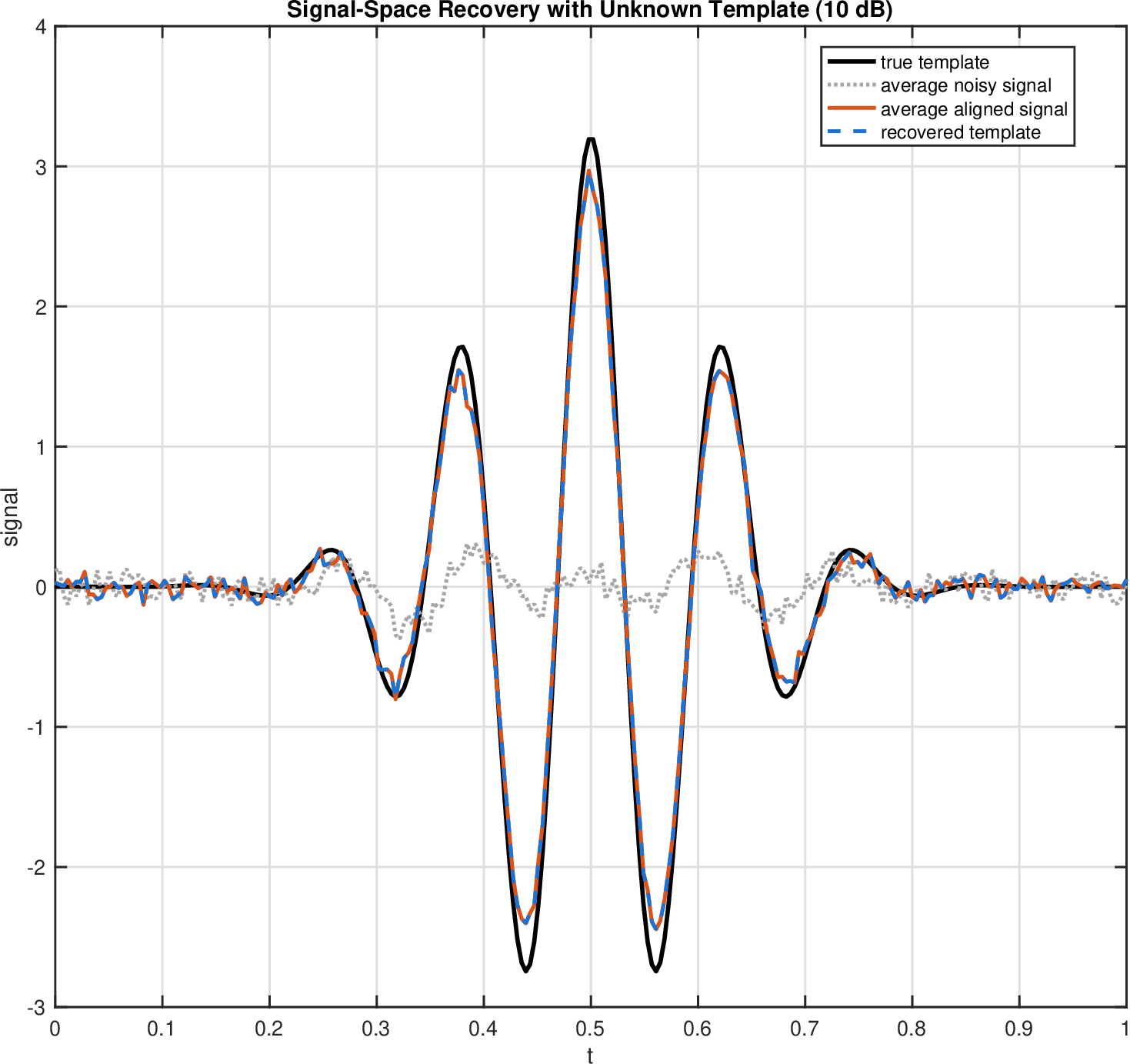}
    \includegraphics[width=0.32\textwidth]{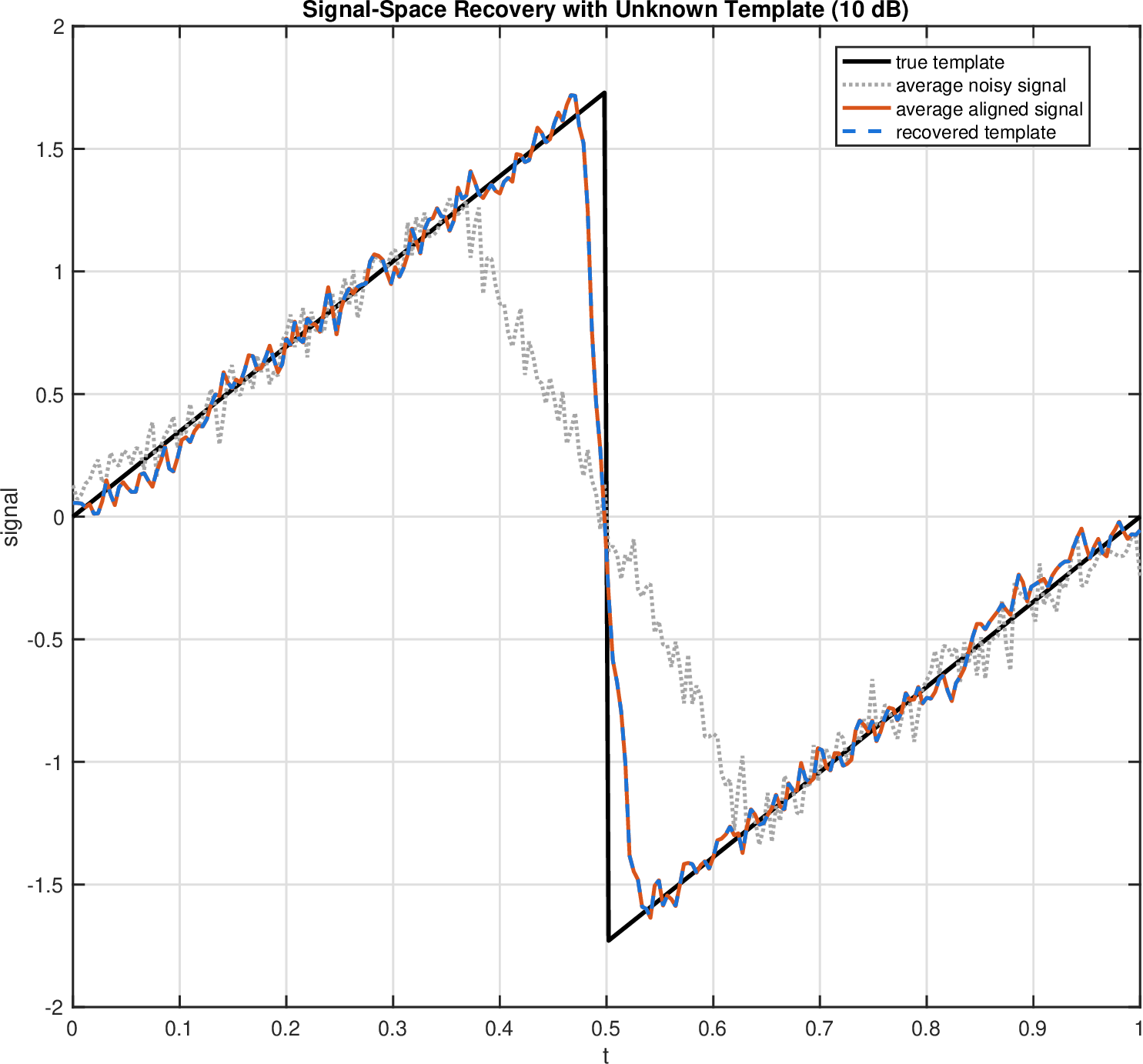}
    \includegraphics[width=0.32\textwidth]{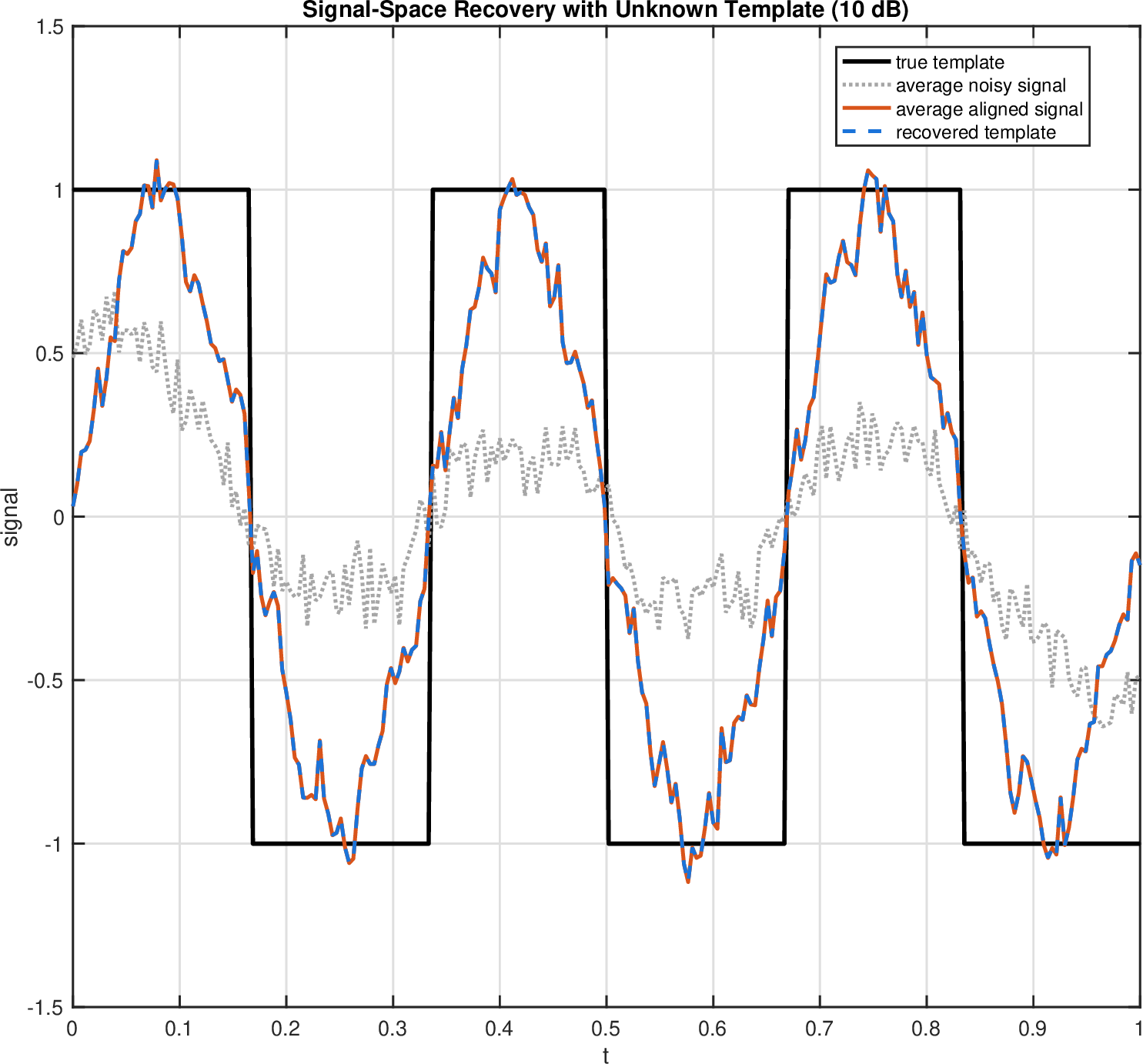}
    \caption{Signal-space interpretation of unknown-template SCDT recovery for translated Gabor, sawtooth, and square signals. The black curve is the true template, the gray curve is the direct average of the noisy shifted signals, and the orange curve is the aligned average, which in this algorithm is also the recovered template estimate.}
    \label{fig:scdt_shift_unknown_gabor_inverse}
\end{figure}

\begin{figure}[h!]
    \centering
    \includegraphics[width=0.6\textwidth]{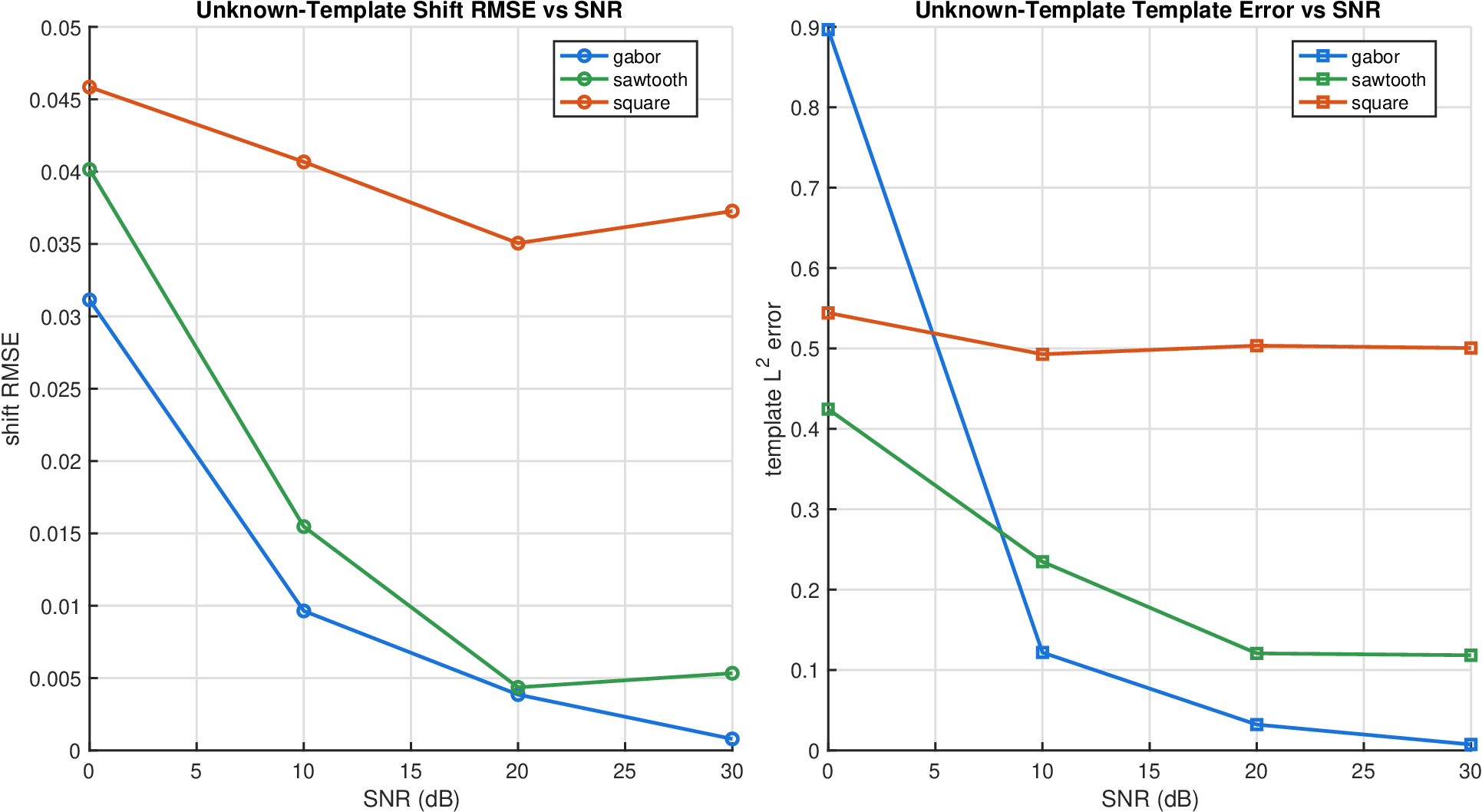}
    \caption{Summary metrics for unknown-template SCDT recovery across all three signal classes. Left: shift RMSE versus SNR. Right: template \(L^2\) error versus SNR.}
    \label{fig:scdt_shift_unknown_summary}
\end{figure}

\bibliographystyle{plain}
\bibliography{refs}

\end{document}